%% file: main.tex
\documentclass[submission,copyright,creativecommons]{eptcs}
\providecommand{\event}{LSFA 2026} 

\usepackage{iftex}

\ifpdf
  \usepackage{underscore}         
  \usepackage[T1]{fontenc}        
\else
  \usepackage{breakurl}           
\fi

\usepackage{amsmath}
\usepackage{amsfonts}
\usepackage{amsthm}
\usepackage{mathtools}
\usepackage{stmaryrd}
\usepackage{rsfso}
\usepackage{xcolor}
\usepackage{xspace}
\usepackage[textwidth=0.65in,textsize=scriptsize]{todonotes}
\usepackage{mdframed}
\usepackage{multicol}
\usepackage{hyperref}
\hypersetup{
  linkcolor  = blue,
  citecolor  = blue,
  urlcolor   = blue,
  colorlinks = true,
}
\usepackage{comment}
\usepackage{tikz, tikzscale}
\usetikzlibrary{arrows,arrows.meta,calc,automata,positioning,decorations.pathreplacing,shapes.geometric,shapes.misc,graphs,backgrounds,shadows.blur}

\tikzset{align at top/.style={baseline=(current bounding box.north)}}
\tikzstyle{every node}=[font=\scriptsize]

\tikzset{
  every picture/.style = {
    thick,
    >=stealth',
    node distance = 2.5em and 3em,
  }
  ,
  cross line/.style = {
    preaction = {
      draw=white,
      -,
      line width=4pt
    }
  }
  , 
  state/.style = {
    circle,
    font = \footnotesize,
    draw,
    inner sep = 0pt,
    minimum size = 0.3em,
    fill
  }
  , 
  dot/.style = {
    fill,
    circle,
    inner sep=0mm,
    minimum size=1.25mm,
    line width=0mm
  }
  , 
  label-state/.style = {
    sloped,
    font = \scriptsize,
    label distance = -2pt
  }
  , 
  label-edge/.style = {
    font = \scriptsize,
    label distance = -2pt
  }
}

\input{macros.tex}

\title{Basic Model Theory for Path Predicate Modal Logic}
\author{Raul Fervari
\institute{CONICET and Universidad Nacional de C\'ordoba \\ Argentina}
\email{rfervari@unc.edu.ar}
\and
Santiago Figueira 
\institute{CONICET and Universidad de Buenos Aires \\ Argentina}
\email{santiago@dc.uba.ar}
\and
Gabriel Goren-Roig
\institute{CONICET and Universidad de Buenos Aires \\ Argentina}
\email{ggoren@dc.uba.ar}
\and
Leonardo Torres
\institute{IMDEA Software Institute \\
Spain}
\email{leonardo.torres@imdea.org}
}

\def\titlerunning{Basic Model Theory for Path Predicate Modal Logic}
\def\authorrunning{Fervari, Figueira, Goren-Roig, Torres}
\begin{document}
\maketitle

\begin{abstract}
Path Predicate Modal Logic ($\ppml$) is a generalization of Basic Modal Logic, where atoms are relational predicates instead of propositional symbols. The study of $\ppml$ is motivated as a way to abstractly investigate data-aware formalisms, such as XPath or DataGL. In this paper, we investigate some basic model theoretical aspects of $\ppml$ to better characterize its expressive power. More concretely, we investigate different ways of defining Hennessy--Milner classes, and a van Benthem characterization theorem. In doing so, we discuss the main challenges of dealing with the novel features of $\ppml$, and what are the similarities with the standard approaches. 
\end{abstract}

\input{intro}
\input{content}

\input{ultrafilter}
\input{characterization}

\input{final}

\bibliographystyle{eptcs}
\bibliography{references}
\end{document}

%% file: macros.tex
\newcommand{\defstyle}[1]{\textbf{#1}}
\newcommand{\arity}{\mathrm{arity}}
\newcommand{\last}{\mathrm{last}}
\newcommand{\ST}{\mathrm{ST}}
\newcommand{\card}[1]{|{#1}|}

\newcommand{\modelA}{\mathcal{A}}

\newcommand{\modelB}{\mathcal{B}}
\newcommand{\foModelM}{\mathfrak{M}}
\newcommand{\foModelN}{\mathfrak{N}}

\newcommand{\powerset}{\mathcal{P}}

\newcommand{\ue}{\mathfrak{ue}}
\newcommand{\ueA}{\modelA^\ue}
\newcommand{\ueB}{\modelB^\ue}
\newcommand{\bisimilar}{\mathrel{\underline{~~~}\hspace{-0.85em}{\leftrightarrow}}
}

\newcommand{\val}[2]{ \llbracket #1 \rrbracket^#2}
\newcommand{\ext}[1]{ \llbracket #1 \rrbracket}
\newcommand{\set}[1]{\{ #1 \}}

\newcommand{\lra}{\leftrightarrow}

\newcommand{\N}{\mathbb{N}}

\newcommand{\DEmp}{\mathrm{debt}}
\newcommand{\D}[1]{\DEmp(#1)}

\newtheorem{theorem}{Theorem}
\newtheorem{lemma}[theorem]{Lemma}

\newtheorem{corollary}[theorem]{Corollary}

\newtheorem{definition}[theorem]{Definition}
\newtheorem{proposition}[theorem]{Proposition}
\newtheorem{example}[theorem]{Example}

\renewcommand{\iff}{\text{\it iff}}
\newcommand{\IFF}{\quad\iff\quad}
\newcommand{\AND}{\text{ and }}

\newcommand{\moc}{\mathrm{PCon}}

\newcommand{\ppml}{\mathsf{PPML}}
\newcommand{\bml}{\mathsf{BML}}
\newcommand{\fol}{\mathsf{FOL}}
\newcommand{\alanguage}{\mathcal{L}}

\newcommand{\bigraul}[1]{\todo[inline,color=red!40]{{\bf R:} {\footnotesize #1}}}

\newcommand{\biggabo}[1]{\todo[inline,color=green!20]{{\bf G:} {\footnotesize #1}}}
\newcommand{\gc}{\biggabo}

\newcommand{\santi}[1]{\todo[color=cyan!20]{{\bf S:} {\footnotesize #1}}\xspace}
\newcommand{\bigsanti}[1]{\todo[inline,color=cyan!20]{{\bf S:} {\footnotesize #1}}}

%% file: intro.tex
\section{Introduction} 
\label{sec:intro}

Data-aware modal logics constitute a prominent approach to the study of languages arising in database theory. Informally, such logics are interpreted over relational structures equipped with data values at each point, and combine modalities for navigating the underlying graph with mechanisms for comparing data values. A central goal in this area is to obtain languages with a good balance between expressive power and computational behavior.

Among the most extensively studied examples are formal (mathematized) versions of the XML path language XPath (see, e.g.,~\cite{LibkinMV16}). This language is well suited for querying structured data, while, at the same time, being closely related to standard modal languages such as Propositional Dynamic Logic (PDL)~\cite{HKT00}. Its navigational fragment, Core-XPath~\cite{GKP05}, has been investigated in depth, both from the point of view of expressivity and from that of satisfiability and complexity (see, e.g.,~\cite{BK08,CateM09,FigueiraS11,Figueira12ACM,LibkinMV16,FigueiraD:2018}). Model-theoretic aspects of XPath-like languages have also been studied in, for example,~\cite{ICDT14Jair,ADF17IC,AbriolaBFF18}, while axiomatizations and completeness results can be found in~\cite{BaeldeLS19,AbriolaDFF17,AbriolaFG24,ACF25}.

These examples suggest the usefulness of more abstract frameworks, capable of isolating the logical principles behind data-aware navigation without being tied to a specific formalism or data format. Path Predicate Modal Logic ($\ppml$, for short), introduced in~\cite{FigGoren25}, is one such language. It is a simple extension of Basic Modal Logic ($\bml$)~\cite{Blackburn&deRijke&Venema01}, in which atomic propositions are not unary but rather $n$-ary predicates. Since such predicates are meant to hold of tuples of states occurring \emph{along a path}, formulas of $\ppml$ are interpreted over sequences of states rather than a single one; this is precisely the reason for the name \emph{Path Predicate Modal Logic}. In other words, one must keep track of the states visited along a path in order to evaluate an $n$-ary atom. As observed in~\cite{FigGoren25}, this provides a natural abstract setting for certain forms of data-aware expressivity; in particular, $\ppml$ can express the data-aware logic DataGL from~\cite{BaeldeLS19}. At the same time, $\ppml$ is of independent interest as a modal language for reasoning over structures based on richer signatures than those usually considered in $\bml$.

The paper~\cite{FigGoren25} introduced $\ppml$ and developed some aspects of its model theory as well as its relationships with other logics, including the data-aware logic DataGL, First Order Logic ($\fol$) and $\bml$ itself. In particular, it introduced and characterized a notion of bounded $\ppml$-bisimulation, together with a ``Hennessy--Milner'' result, and a standard translation into $\fol$. The general approach of said paper has two distinctive characteristics: firstly, it develops the model-theoretical aspects of $\ppml$ drawing heavily on the categorical framework of game comonads and arboreal categories. Thus it favors results that can be obtained directly from the categorical formalism, e.g. a tree-like model property, a Lovász-type homomorphism counting theorem and a Feferman-Vaught-Mostowski-style theorem for products of models; or results that at least relate somehow to this formalism, such as a version of the Chandra-Merlin correspondence for the positive fragment of $\ppml$. Secondly, it focuses on the so-called \emph{single-point semantics}, where formulas are interpreted at a single point of a structure, even though this required the definition of a more general \emph{valuation semantics} in which formulas are interpreted on a sequence of points, akin to $\fol$.

The current paper continues the development of the model theory of $\ppml$ in a way that is complementary to the previous development with respect to both characteristics. Firstly, it focuses on classical model-theoretical aspects of the logic that were neglected in the previous treatment: saturated models, ultrafilter extensions and a van Benthem-style characterization theorem~\cite{vanBenthem1983}. Secondly, it aims to study $\ppml$ in its more general valuation semantics and restates previous results in this, more general context.

\paragraph{Outline and Contributions.} In~Sec.~\ref{sec:ppml}, we introduce the syntax and semantics of $\ppml$, together with the valuation-based notions of standard translation of $\ppml$ into $\fol$ and unbounded bisimulation. Both definitions are adapted versions of those given in~\cite{FigGoren25}, better suited for our current purposes.  For these adapted versions, we start by proving some basic properties.  Precisely, we show correctness of the translation (Prop.~\ref{prop:equiv1orderPPML}) and bisimulation invariance (Thm.~\ref{th:bisImplyEquiv}), by straightforward modifications of the corresponding results from the literature~\cite[Prop. 4.1 and Thm. 2.10(2)]{FigGoren25}.

In~Sec.~\ref{sec:ufe}, we study the Hennessy--Milner property for $\ppml$. We introduce a suitable notion of $\ppml$-saturated structures (Def.~\ref{def:saturation}) and prove that both finitely-branching models and $\ppml$-saturated models constitute Hennessy--Milner classes for $\ppml$ (Thm.~\ref{th:HM} and Thm.~\ref{th:saturationImplHM}). On the other hand, we define ultrafilter extensions for models of $\ppml$ (Def.~\ref{def:uf-extension}), and prove a ``bisimilarity somewhere else'' theorem (Thm.~\ref{th:equivCharUF}) showing that logical equivalence between two models always coincides with bisimilarity between their ultrafilter extensions. These results follow closely the classical presentation for $\bml$ as in~\cite{Blackburn&deRijke&Venema01}, although they require a non-trivial adaptation of the classical ideas in order to deal with the intrinsic features of $\ppml$.

Finally, in ~Sec.~\ref{sec:characterization} we prove a van Benthem-style characterization theorem~\cite{vanBenthem1983} for $\ppml$, identifying this language with the $\ppml$-bisimulation-invariant fragment of $\fol$. We conclude with some remarks on this and future work in~Sec.~\ref{sec:final}.

%% file: content.tex
\section{$\ppml$ and Basic Constructions}
\label{sec:ppml}

In this section, we review the basic definitions of Path Predicate Modal Logic ($\ppml$), namely its syntax, models, and semantics, together with some auxiliary notions. In contrast with~\cite{FigGoren25}, we focus on the so-called valuation semantics for $\ppml$, i.e., the semantics in which formulas are interpreted with respect to a sequence of points, rather than under the single-point semantics. In doing so, we adapt the previously introduced notions of standard translation into First-Order Logic ($\fol$) and $\ppml$-bisimulation into a suitable form for valuation semantics, and observe that both the correctness of the standard translation and bisimulation invariance can be established using essentially the same arguments.

\subsection{Path Predicate Modal Logic}

\paragraph{Preliminaries.}

We assume standard notions from $\fol$ (see, e.g.,~\cite{Enderton1972}). We now fix some definitions and notation. A \defstyle{relational first-order signature}, or simply a \defstyle{signature}, consists of a set $\sigma$ of relation symbols together with a function $\arity : \sigma \to \mathbb{N}_{>0}$ assigning to each symbol a positive integer, its \defstyle{arity}. As usual, we often write simply $\sigma$ for the pair $(\sigma,\arity)$. A \defstyle{$\sigma$-structure}, or \defstyle{model}, is a tuple $\modelA=(A,\cdot^{\modelA})$, where $A$ is a nonempty set, called the \defstyle{domain} of $\modelA$, and for each relation symbol $R\in \sigma$, the interpretation $R^{\modelA}$ is a subset of $A^{\arity(R)}$. We will shortly introduce a designated binary relation symbol $E\in \sigma$; given a model $\modelA$ we will write $a_1\prec a_2$ whenever $(a_1,a_2)\in E^{\modelA}$.

Given a set $S$, we write $S^+$ for the set of all nonempty finite sequences over $S$. We write sequences as the juxtaposition of their elements without any additional symbol, and similarly for the concatenation of sequences. We identify sequences with tuples when necessary. If $s=s_1\ldots s_{\card{s}}\in S^+$, we denote by $\last_k(s)$ the suffix of length $k$ of $s$ whenever $k\leq \card{s}$, and we set $\last_k(s)=s$ otherwise. We write simply $\last(s)$ for the last element of $s$, namely $s_{\card{s}}$. 

Finally, we call a \defstyle{valuation} a sequence $s\in A^+$, where $A$ is the domain of a model $\modelA$.

\paragraph{Syntax and Semantics of $\ppml$.}

Let $\sigma$ be a relational signature containing a distinguished binary relation symbol $E$. The set of \defstyle{$\ppml$-formulas} over $\sigma$ is generated by the grammar
\[
    \phi,\psi ::= \top \mid R \mid \lnot \phi \mid \phi \land \psi \mid \Diamond \phi,
\]
where $R\in \bar{\sigma}$ and $\bar{\sigma}:=\sigma\setminus\{E\}$. Thus, the distinguished relation $E$ is not itself an atomic formula of the language. The remaining Boolean connectives ($\bot,\vee,\rightarrow,\leftrightarrow$) and the modality $\Box$ are defined as usual.

Given a $\sigma$-structure $\modelA$ and a valuation $s\in A^+$, the satisfaction relation $\modelA,s\models \phi$ is defined inductively as follows:
\[
\begin{array}{l@{\quad}c@{\quad}l}
    \modelA,s \models \top
    & &
    \text{always}, \\[1mm]
    \modelA,s \models R
    &\iff&
    \arity(R)\leq \card{s} \text{ and } \last_{\arity(R)}(s)\in R^{\modelA}, \\[1mm]
    \modelA,s \models \lnot \phi
    &\iff&
    \modelA,s \not\models \phi, \\[1mm]
    \modelA,s \models \phi \land \psi
    &\iff&
    \modelA,s \models \phi \text{ and } \modelA,s \models \psi, \\[1mm]
    \modelA,s \models \Diamond \phi
    &\iff&
    \text{there exists } a\in A \text{ such that } \last(s)\prec a \text{ and } \modelA,sa \models \phi.
\end{array}
\]
The crucial difference with respect to the semantics of Basic Modal Logic ($\bml$) is that atomic formulas are now interpreted by means of relations of arbitrary arity, and the modality $\Diamond$ extends the current sequence by appending a successor. In this way, the valuation records the history of the path constructed during evaluation.

Given a formula $\phi$ and a model $\modelA$, we write
\[
    \val{\phi}{\modelA}:=\{s\in A^+ \mid \modelA,s\models \phi\}
\]
for its \defstyle{extension} in $\modelA$. We say that $\phi$ is \defstyle{satisfiable} if there exists some $\sigma$-structure $\modelA$ such that $\val{\phi}{\modelA}\neq\emptyset$. If $\modelA$ and $\modelB$ are $\sigma$-structures and $s\in A^+$, $t\in B^+$ are sequences such that $\card{s}=\card{t}$, we say that $s$ and $t$ are \defstyle{$\ppml$-equivalent} if they satisfy the same $\ppml$-formulas, that is, if for every $\psi\in \ppml$,
\[
    \modelA,s\models \psi \quad \iff \quad \modelB,t\models \psi.
\]
In that case we write $(\modelA,s)\equiv (\modelB,t)$, or simply $s\equiv t$ when the models are clear from the context.

\begin{example}\label{example}
    Consider the signature $\sigma = \set{E,B}$, where $\arity(E)= \arity(B)=2$. Let $\modelA = (A, E^\modelA, B^\modelA)$ be the following $\sigma$-structure: its domain is $A = \set{a,b,c,d}$, the accessibility relation $E^\modelA$ is represented by the black edges in the following graph, while $B^\modelA$ is given by dashed red edges.

    \begin{center}
    \begin{tikzpicture}
        \node[state] (b) [label=below:{$b$}] {};
        \node[state] (a) [label=above:{$a$}] [above left = of b] {};
        \node[state] (c) [label=below:{$c$}] [right = of b]{};
        \node[state] (d) [label=below:{$d$}] [below = of a] {};

        \path (a) edge[->] node [left] {} (b);
        \path (b) edge[->] node [left] {} (c);
        \path (b) edge[->] node [left] {} (d);
        \path (a) edge[->] node [left] {} (d);

        \path[->,dashed, red!80, thin, bend left=60]  (a)  edge (b);
        \path[->,dashed, red!80, thin, bend left=60]  (b)  edge (d);
    \end{tikzpicture}
    \end{center}
    
    \noindent Observe that $\modelA, a \models \Diamond (B \land \Diamond B)$, since $abd$ form an $E^\modelA$-path, and moreover $(a,b), (b,d) \in B^\modelA$.
\end{example}

Notice that $\ppml$ extends $\bml$. Indeed, if every relation symbol in $\bar{\sigma}$ has arity $1$, atomic formulas depend only on the last element of the current sequence, so the extra path information becomes irrelevant and the semantics coincides with the standard Kripke semantics of $\bml$. Thus, over so-called \textbf{unimodal signatures}, $\ppml$ is precisely Basic Modal Logic.

The semantic interpretation above (i.e., with respect to \emph{sequences} of points) will  henceforth be called \defstyle{valuation semantics}, whereas the \defstyle{single-point semantics} consists of the restriction of the satisfaction relation $\models$ to valuations of length $1$. In other words, the single-point semantics of a formula $\phi$ is given by its interpretation in \defstyle{pointed models} $(\modelA, a)$ with $a \in A$. As mentioned before, contrary to~\cite{FigGoren25} we will consider $\ppml$ with its valuation semantics.

The modality $\Diamond$ plays the role of extending the current valuation, and therefore builds the path on which relation symbols are eventually evaluated. In particular, an occurrence of a relation symbol $R$ of arity $n$ can only be meaningfully evaluated once the current valuation has length at least $n$; otherwise, it is automatically false. For instance, let $\phi=\Diamond\Diamond\Diamond R \lor R$ with $\arity(R)=3$. Starting from a valuation of length $2$, the left disjunct can be evaluated, since the three diamonds extend the valuation into a sequence of length $5$, sufficient to interpret $R$. By contrast, the right disjunct is false independently of the model, because a valuation of length $2$ is too short to evaluate $R$.

This suggests measuring, for each formula $\phi$, the minimal length of a ``sensible'' valuation for $\phi$, i.e. a valuation such that each atom in the formula is evaluated on a valuation of sufficient length.

\begin{definition}\label{def:debt}
Let $\sigma$ be a relational signature and let $\phi \in \ppml$. We define the function $\D{\phi}$ recursively as follows:
\[
\begin{array}{r@{\qquad}r}
\begin{array}{lcl}
    \D{\top} & := & 0, \\
    \D{R} & := & \arity(R) - 1, \\
    \D{\neg \phi} & := & \D{\phi},
\end{array}
&
\begin{array}{lcl}
    \D{\phi_1 \land \phi_2} & := & \max\{ \D{\phi_1}, \D{\phi_2} \}, \\
    \D{\Diamond \phi} & := & \max\{ 0, \D{\phi} - 1 \}.
\end{array}
\end{array}
\]
We refer to $\D{\phi}$ as the \textbf{modal debt} of $\phi$.
\end{definition}

In~\cite{FigGoren25}, a $\ppml$ formula $\phi$ was called \textbf{well nested} whenever $\D{\phi} = 0$. It makes sense to single-out the well-nested formulas in the context of single-point semantics, but not so much for valuation semantics. Indeed, the formula $\phi = R$ for some symbol $R$ of arity $n > 1$ may be satisfiable under valuation semantics, even though it is not satisfiable under single-point semantics. In this sense, our new context suggests reinterpreting modal debt not as a measure of how far a formula is from being ``sensible'' (i.e. how much additional path information is still needed for its evaluation), but rather as controlling the length of ``sensible valuations'' for the formula: a valuation $s$ is ``sensible'' for $\phi$ iff $|s| \geq \D{\phi} + 1$. Notice that given $\phi$ and a valuation $s$ in a model $\modelA$ with $|s| \leq \D{\phi}$, it is not necessarily the case that $\modelA, s \not\models \phi$. The simplest such example is $\phi = \lnot R$ for $R$ with $\arity(R) > 1$, then $\modelA, s \models \phi$ for all $s$ with $|s|< \arity(R)$.


We close the section with a novel normalization result which clarifies this non-trivial behavior of the semantics of $\ppml$. Moreover, it will constitute an essential technical tool in the upcoming sections. The idea is the following: if we want to evaluate a formula $\phi$ only on valuations of a fixed length $n$, then we are free to rewrite \emph{some} of the atoms in $\phi$ into $\bot$, namely the atoms of arity $r$ which are not in the scope of at least $r-n$ diamonds. This always results in a formula $\psi$ with $\D{\psi}\leq n-1$.

More precisely, for each $n\geq 1$, define recursively a map $N_n:\ppml\to\ppml$ by
\[
N_n(\top):=\top,
\qquad
N_n(R):=
\begin{cases}
R & \text{if } \arity(R)\leq n,\\
\bot & \text{otherwise},
\end{cases}
\]
\[
N_n(\neg\phi):=\neg N_n(\phi),
\qquad
N_n(\phi\land\psi):=N_n(\phi)\land N_n(\psi),
\qquad
N_n(\Diamond\phi):=\Diamond N_{n+1}(\phi).
\]
%
On the other hand, for any pair of $\sigma$-structures $\modelA, \modelB$ and valuations $s\in A^+, t\in B^+$, let us write $$\modelA, s \equiv_n \modelB, t$$ whenever $|s|=|t|$ and for every $\phi \in \ppml$ with $\D{\phi}\leq n-1$,
$\modelA, s \models \phi \ \iff \ \modelB, t \models \phi.$
We also write $s \equiv_n t$ when the structures are clear from context.

\begin{lemma}\label{lem:debt-normalization}
Let $\phi\in \ppml$ and $n\geq 1$. Then there exists $\psi\in\ppml$ with $\D{\psi}\leq n-1$ such that for any $\sigma$-structure $\modelA$ and valuation $s \in A^+$ with $|s| = n$,
$$\modelA, s \models \phi \quad \iff \quad \modelA, s \models \psi.$$

As a consequence of this fact, for any pair of $\sigma$-structures $\modelA, \modelB$ and valuations $s\in A^+, t\in B^+$ with $|s|=|t|=n$,
$$\modelA, s \equiv \modelB, t \quad \iff \quad \modelA, s \equiv_n \modelB, t.$$
\end{lemma}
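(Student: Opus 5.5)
We take $\psi := N_n(\phi)$, where $N_n$ is the map defined just before the lemma, and prove two claims about it by a single structural induction on $\phi$. The induction must quantify over all $n\geq 1$ at once, because the modal clause $N_n(\Diamond\phi)=\Diamond N_{n+1}(\phi)$ raises the index.

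\emph{Claim (a):} $\D{N_n(\phi)}\leq n-1$ for every $n\geq 1$. The Boolean cases are immediate. For $\top$ and $\bot$ (with $\bot=\lnot\top$) the debt is $0\leq n-1$. For an atom $R$ with $\arity(R)\leq n$, the debt is $\arity(R)-1\leq n-1$. For $\Diamond\phi$, the induction hypothesis at index $n+1$ gives $\D{N_{n+1}(\phi)}\leq n$. Hence
\[
\D{\Diamond N_{n+1}(\phi)}=\max\{0,\D{N_{n+1}(\phi)}-1\}\leq n-1 .
\]

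\emph{Claim (b):} for every $\modelA$ and every $s\in A^+$ with $|s|=n$, we have $\modelA,s\models\phi$ iff $\modelA,s\models N_n(\phi)$.

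The atom case is the only genuinely semantic point. If $\arity(R)\leq n=|s|$, then $N_n(R)=R$ and there is nothing to prove. If $\arity(R)>|s|$, then $R$ is false at $s$ by the semantic clause, and so is $\bot$.

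Negation and conjunction follow directly from the induction hypothesis at the same $n$.

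For $\Diamond\phi$: $\modelA,s\models\Diamond\phi$ iff there is $a$ with $\last(s)\prec a$ and $\modelA,sa\models\phi$. Since $|sa|=n+1$, the induction hypothesis at index $n+1$ applies to $sa$ and turns this into $\modelA,sa\models N_{n+1}(\phi)$, i.e., $\modelA,s\models\Diamond N_{n+1}(\phi)=N_n(\Diamond\phi)$.

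The main point to get right is this modal step: the hypothesis has to be stated uniformly in $n$ so that the index shift is available. Once that is done, nothing else is delicate.

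For the consequence, the direction from $\equiv$ to $\equiv_n$ is trivial, since $\equiv_n$ only tests a subset of the formulas tested by $\equiv$, and $|s|=|t|$ is given. Conversely, assume $s\equiv_n t$ with $|s|=|t|=n$, and let $\phi$ be arbitrary. Then
\[
\modelA,s\models\phi \iff \modelA,s\models N_n(\phi) \iff \modelB,t\models N_n(\phi) \iff \modelB,t\models\phi .
\]
The outer equivalences use claim (b) in each structure, which is possible because $\psi=N_n(\phi)$ does not depend on the model. The middle equivalence uses $\equiv_n$ together with claim (a).
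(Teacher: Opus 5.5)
Your proposal is correct and follows essentially the same route as the paper. The paper also takes $\psi = N_n(\phi)$ and dismisses the equivalence as a straightforward structural induction, which you spell out, including the debt bound and the need to quantify over all $n$ for the modal step. It then derives the consequence through the same chain of equivalences.
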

\begin{proof}
    For the first claim, take $\psi = N_n(\phi)$. The proof is then straightforward by structural induction on $\phi$. Now let $\modelA, \modelB$ be a pair of $\sigma$-structures and let $s\in A^+, t\in B^+$ with $|s|=|t|=n$. Clearly $s \equiv t$ implies $s \equiv_n t$. As for the reverse implication, assuming $s \equiv_n t$, let $\phi \in \ppml$ be arbitrary. Then taking $\psi$ as above
    $$
    s \models \phi \quad\iff\quad s \models \psi \quad\iff\quad t\models \psi \quad\iff\quad t \models \phi,
    $$
    and this concludes the proof.
\end{proof}



\subsection{The Standard Translation} \label{sec:st}

In what follows we fix a set of first-order variables $x_1,x_2,\dots$ indexed by the natural numbers.

\begin{definition}[Standard Translation]\label{def:st}
Given a sequence of first-order variables $\overline{x} = x_1\ldots x_n$ for some $n \in \mathbb{N}_{>0}$ and a $\ppml$ formula $\phi$, we define a $\fol$ formula $\ST_{\overline{x}}(\phi)$ over the signature $\sigma$ with free variables in $\overline x$ by structural induction:
\[
\begin{array}{lcl}
    \ST_{\overline{x}}(\top) &:= &  \top \\
    \ST_{\overline{x}}(R) &:= & 
        \begin{cases}
            R(\last_{\arity(R)}(\overline{x})) & \text{if } \arity(R) \leq \card{\overline{x}} \\
            \bot & \text{otherwise}
        \end{cases} \qquad (R \in \bar{\sigma}) \\
    \ST_{\overline{x}}(\lnot\phi) & := & \lnot \ST_{\overline{x}}(\phi) \\
    \ST_{\overline{x}}(\phi \land \psi) &:= & \ST_{\overline{x}}(\phi) \land \ST_{\overline{x}}(\psi) \\
    \ST_{\overline{x}}(\Diamond \phi) & := & \exists x_{n+1} (E(x_n,x_{n+1}) \land \ST_{\overline{x}x_{n+1}}(\phi)).
\end{array}
\]
\end{definition}

This definition adapts the one from~\cite{FigGoren25} for the valuation semantics. Indeed, therein, by focusing on single-point semantics, it is shown that the standard translation actually falls in the fragment of $\fol$ using at most $N$ variables, for some $N \in \N$,  whenever the arity of symbols in $\sigma$ is bounded by $N$.\footnote{It is also easy to see that the $\ppml$ fragment of formulas with \defstyle{modal depth} at most $k$, i.e. with at most $k$ nested diamonds, falls inside the $\fol$ fragment with quantifier rank bounded by $k$. This is true both in the context of single-point and of valuation semantics.} This is not the case when considering full valuation semantics, hence their definition of a standard translation with a cyclic set of $N$ variables is not applicable for our purposes.

Having introduced this variant of the definition of $\ST$, we readily obtain the following result, which in a sense generalizes the one of \cite{FigGoren25} to valuation semantics.

\begin{proposition}\label{prop:equiv1orderPPML}
Let $\modelA$ be a $\sigma$-structure. 
For every formula $\phi \in \ppml$ and sequence of variables $\bar x$ with $|\bar x| \geq \D{\phi} + 1$, we have 
\[   
\modelA,s \models \phi \quad \iff \quad \modelA \models \ST_{\bar x}(\phi)[s].
\]
\end{proposition}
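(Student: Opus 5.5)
The plan is to prove a stronger statement by structural induction on $\phi$, namely one that drops the hypothesis $\card{\bar x}\geq \D{\phi}+1$. For every $n\geq 1$, with $\bar x = x_1\ldots x_n$, and every valuation $s\in A^+$ with $\card{s}=n$, I will show that $\modelA,s\models\phi$ iff $\modelA\models \ST_{\bar x}(\phi)[s]$. Here $[s]$ assigns $s_i$ to $x_i$. The debt hypothesis in the statement then only records when the translation is non-degenerate, and the proposition follows as a special case. The induction must be over $\phi$ with $n$ universally quantified, since the $\Diamond$ case passes from length $n$ to length $n+1$. Generalizing over $n$ is also why I drop the debt hypothesis: an inductive hypothesis carrying $\card{\bar x}\geq\D{\cdot}+1$ would only be awkward bookkeeping. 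It would in fact go through, since $\D{\phi}\leq \D{\Diamond\phi}+1$.

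The base cases are direct. For $\top$ both sides are true. For an atom $R$ with $\arity(R)=r$ there are two cases.
\begin{itemize}
\item If $r\leq n$, the semantics asks whether $\last_r(s)\in R^{\modelA}$. The translation is $R(\last_r(\bar x))$, and under $[s]$ the variables $\last_r(\bar x)=x_{n-r+1}\ldots x_n$ denote exactly $s_{n-r+1}\ldots s_n=\last_r(s)$.
\item If $r>n$, the semantics declares $R$ false because $\arity(R)>\card{s}$, and the translation is $\bot$.
\end{itemize}
The Boolean cases $\lnot$ and $\land$ follow immediately from the inductive hypothesis at the same $\bar x$ and $s$.

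The only case with content is $\Diamond\phi$. Unfolding, $\modelA\models \exists x_{n+1}(E(x_n,x_{n+1})\land \ST_{\bar x x_{n+1}}(\phi))[s]$ holds iff there is some $a\in A$ with $(s_n,a)\in E^{\modelA}$, that is $\last(s)\prec a$, and $\modelA\models \ST_{\bar x x_{n+1}}(\phi)[sa]$. Here I use that $x_{n+1}$ does not occur in $\bar x$, so extending the assignment $[s]$ by $x_{n+1}\mapsto a$ is exactly the assignment $[sa]$ for the sequence $x_1\ldots x_{n+1}$. The inductive hypothesis for $\phi$ at length $n+1$ turns this into $\modelA,sa\models\phi$, which is the semantic clause for $\Diamond$.

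I expect the main obstacle to be purely the bookkeeping of variables: checking that the translation always uses the canonically indexed variables $x_1,\ldots,x_n$. Otherwise a quantified $x_{n+1}$ could clash with a variable already in use. Since $\ST$ is only ever applied to initial segments $x_1\ldots x_n$, this holds by a trivial side induction, and free variables of $\ST_{\bar x}(\phi)$ are contained in $\bar x$, so the evaluation under $[s]$ is well defined.
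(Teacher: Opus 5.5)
Your proposal is correct and is the same structural induction on $\phi$ that the paper invokes (the paper gives no more than ``straightforward by structural induction''), with the length $n$ correctly universally quantified so that the $\Diamond$ case can pass to $n+1$. Your observation that the debt hypothesis can be dropped once $\card{s}=\card{\bar x}$ is also right: the $\bot$ clause of $\ST$ for atoms of arity greater than $\card{\bar x}$ exactly mirrors the arity check in the semantic clause for $R$.
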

\begin{proof}
    Straightforward by structural induction on $\phi$.
\end{proof}

We will come back to the standard translation in Sec. \ref{sec:characterization}, where we will use it to identify $\ppml$ with a bisimulation-invariant fragment of $\fol$.



It is natural to ask how the image of the standard translation $\ST$ relates to well-studied fragments of $\fol$. In the case of $\bml$, the classical answer is that the standard translation lands inside the {\em guarded fragment} (GF), introduced in \cite{AndrekaBenthemNemeti98}. Recall that GF is obtained by restricting existential quantification to formulas of the form
$
 \exists\bar y (\alpha(\bar x, \bar y) \land \varphi(\bar x,\bar y)),
$
where $\alpha$ is an atomic formula containing all free variables of the matrix. The usual standard translation of $\bml$ has exactly this shape:
$
\ST_x(\Diamond\phi) = \exists y (E(x,y)  \land \ST_y(\phi)).
$
Thus, the modal step is guarded by the accessibility atom $E(x,y)$, and the image of the
 standard translation of $\bml$ is contained in GF.

The same remains true for $\ppml$ as long as all non-modal relation symbols have arity at most 2. Indeed, in that case unary atoms are translated into formulas involving only the last variable of the current valuation, and binary atoms into formulas involving only the last two variables. Hence, by a straightforward induction on $\phi$, every formula $\ST_{x_1 \cdots x_n}(\phi)$ depends only on the suffix $x_{n-1},x_n$ once $n\geq 2$. It follows that the modal clause
$
\ST_{x_1\cdots x_n} (\Diamond \phi)  = \exists x_{n+1}(E(x_n,x_{n+1}) \land \ST_{x_1\cdots x_{n+1}} (\phi))
$
is
guarded.

A natural weakening of GF is the {\em loosely guarded fragment} (LGF), introduced by Gr\"adel in \cite{Gradel99}. In LGF, the guard may be a conjunction of atoms rather than a single atom, provided that every pair of free variables in the matrix appears together in some atomic conjunct of the guard. From the point of view of $\ppml$, however, this relaxation is still not enough once one allows relation symbols of arity 3.
To see this, let $R$ be a ternary relation symbol and consider the $\ppml$-formula
$\psi = \Diamond\Diamond \neg R$.
Its standard translation with one free variable is
$
\theta(x) =\exists y  (E(x,y) \land \exists z(E(y,z) \land \neg R(x,y,z))).
$
Semantically equivalent to $\exists y,z(E(x,y) \land E(y,z) \land \neg R(x,y,z))$, $\theta(x)$ says that there is an $E$-path of length 2 starting at $x$ such that the ternary relation $R$ fails on the triple of visited states.
%
%
It can be shown that $\theta(x)$ is not equivalent to any formula of LGF, showing that $\ppml$ goes beyond LGF once relation symbols of arity at least 3 are allowed.
The key point is already visible in the shape of the formula: the variables $x,y,z$ all occur in the matrix, but there is no guard covering the pair $(x,z)$.

A more appropriate  ambient fragment for $\ppml$ is the {\em fluted fragment}, originally due to Quine and studied extensively in modern form in~\cite{PrattHartmannSzwastTendera2019}. Roughly speaking, fluted formulas are those in which the order of quantification agrees with the order in which variables occur as arguments of predicates. This matches the present setting rather well. On the one hand, if $\phi$ is an atomic $\ppml$-formula of the form $R$, then 
$\ST_{x_1 \cdots x_n}(R) = R(\last_{\arity(R)}(x_1 \cdots x_n))$, 
so atoms are translated into predicates applied to contiguous suffixes of the current valuation. On the other hand, modal formulas extend the current context by one fresh last variable:
$
\ST_{x_1\cdots x_n} (\Diamond \phi)
   = \exists x_{n+1}  (E(x_n,x_{n+1}) \land \ST_{x_1 \cdots x_{n+1}}(\phi)).
$
Hence, by a straightforward induction on $\phi$, the image of the standard translation of $\ppml$ is contained in the fluted fragment of~$\fol$. 

In this sense, while $\bml$ naturally lives inside a guarded environment, $\ppml$ naturally lives in a fluted one, both in its single-point and valuation semantics. This inclusion is proper. Just as $\bml$ is not the whole guarded fragment, $\ppml$ is not the whole fluted fragment: the image of its standard translation is constrained by a specific path discipline, namely that atoms only inspect contiguous suffixes of the current valuation and that quantification proceeds only by appending a fresh last variable via the distinguished relation $E$. The full fluted fragment is strictly more permissive.

%

\subsection{Bisimulation}
Now let us turn our attention to bisimulations, which will be central to our results in both Sec.~\ref{sec:ufe} and Sec.~\ref{sec:characterization}. As usual, bisimulation provides a structural characterization of logical equivalence without referring to syntax explicitly.

\begin{definition}\label{def:bisimulations}
Let $\modelA$ and $\modelB$ be two $\sigma$-structures. Let $Z \subseteq A^+ \times B^+$ be a nonempty relation that relates sequences of the same length, this is,
$Z \subseteq \bigcup_{i \in \mathbb{N}} A^i \times B^i$.
We say that $Z$ is a \defstyle{$\ppml$-bisimulation} between $\modelA$ and $\modelB$ if the following conditions hold:
\begin{description}
    \item[(pred)] If $(s,t) \in Z$, then for every $R \in \bar{\sigma}$ we have
    $\modelA,s \models R$ iff $\modelB,t \models R$.
    
    \item[(zig)] If $(s,t) \in Z$, then for every $a \in A$ s.t. $\last(s) \prec a$, there exists $b \in B$ s.t. $\last(t) \prec b$ and $(sa,tb) \in Z$.

    \item[(zag)] If $(s,t) \in Z$, then for every $b \in B$ s.t. $\last(t) \prec b$, there exists $a \in A$ s.t. $\last(s) \prec a$ and $(sa,tb) \in Z$.
\end{description}
If $Z$ is a bisimulation relating the sequences $s$ and $t$ in the models $\modelA$ and $\modelB$, respectively, we say that $s$ and $t$ are bisimilar and write $(\modelA,s) \bisimilar (\modelB,t)$, or simply $s \bisimilar t$ when the context is clear.
\end{definition}

This notion of $\ppml$-bisimulation is adapted from~\cite{FigGoren25} in the following two ways: firstly, it relates sequences instead of points in accordance with our focus on valuation semantics; secondly, it is a notion of \emph{unbounded} bisimulation, in contrast with the previously studied bounded bisimulation. The unbounded notion is more appropriate for our current investigation since here we do not deal with $\ppml$ fragments of bounded modal depth.

Notice that the notion of $\ppml$ bisimulation presented above gives an unbounded bisimulation for single-point semantics by simply restricting the relation $\bisimilar$ to valuations of length $1$.
Moreover, if $\sigma$ is a unimodal signature then $s \bisimilar t$ is always equivalent to $\last(s) \bisimilar \last(t)$ and this notion of bisimulation reduces to the standard, unbounded $\bml$ bisimulation, just like the one in~\cite{FigGoren25} reduces to the standard bounded $\bml$ bisimulation.

We now prove bisimulation invariance of truth for $\ppml$ formulas with valuation semantics by an immediate adaptation of the corresponding proof in \cite{FigGoren25}.

\begin{theorem}\label{th:bisImplyEquiv}
    Let $\modelA$ and $\modelB$ be two $\sigma$-structures, and $s$ and $t$ sequences from the respective domains. Then $(\modelA,s) \bisimilar (\modelB,t)$ implies $(\modelA,s) \equiv (\modelB,t)$.
\end{theorem}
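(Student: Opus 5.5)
We fix a $\ppml$-bisimulation $Z$ between $\modelA$ and $\modelB$ and prove, by structural induction on $\phi \in \ppml$, the stronger statement
\[
\text{for all } (s,t)\in Z:\quad \modelA,s\models\phi \iff \modelB,t\models\phi .
\]
Quantifying over all pairs in $Z$ inside the induction hypothesis is what makes the modal step go through, since the diamond clause moves from $(s,t)$ to extended pairs $(sa,tb)$. The theorem then follows immediately: $(\modelA,s)\bisimilar(\modelB,t)$ means some bisimulation contains $(s,t)$. Note also that $Z$ only relates sequences of equal length, so $|s|=|t|$, as the definition of $\equiv$ requires.

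The base cases are immediate. The case $\top$ is trivial. For an atom $R\in\bar\sigma$, the equivalence is literally condition \textbf{(pred)}. This already covers the length side-condition in the semantics of atoms ($\arity(R)\le|s|$), since \textbf{(pred)} is stated in terms of $\models$ rather than membership in $R^{\modelA}$; in any case $|s|=|t|$. The Boolean cases $\lnot\phi$ and $\phi\land\psi$ follow directly from the induction hypothesis applied to the same pair $(s,t)$.

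The only step with content is $\Diamond\phi$. Suppose $\modelA,s\models\Diamond\phi$, so there is $a$ with $\last(s)\prec a$ and $\modelA,sa\models\phi$. By \textbf{(zig)} there is $b$ with $\last(t)\prec b$ and $(sa,tb)\in Z$. The induction hypothesis for $\phi$, applied to the pair $(sa,tb)$, gives $\modelB,tb\models\phi$, hence $\modelB,t\models\Diamond\phi$. The converse is symmetric, using \textbf{(zag)}.

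I do not expect a genuine obstacle. The one point needing care is to state the induction hypothesis uniformly over all pairs in $Z$, including pairs of longer sequences, rather than only for the fixed pair $(s,t)$; this is what permits the appeal to $(sa,tb)$ in the modal case. No use of modal debt or of Lemma~\ref{lem:debt-normalization} is needed, since the bisimulation conditions are phrased directly via the satisfaction relation.
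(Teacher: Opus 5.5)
Your proof is correct and follows the same route as the paper: structural induction on the formula, with atoms given directly by \textbf{(pred)}, routine Boolean cases, and the $\Diamond$ case handled by \textbf{(zig)} plus the induction hypothesis, with \textbf{(zag)} for the converse. Stating the induction hypothesis explicitly for all pairs in $Z$ simply makes precise what the paper uses implicitly when it applies the hypothesis to the extended pair $(sa,tb)$.
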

\begin{proof}
    Let $\modelA$ and $\modelB$ be two $\sigma$-structures such that $(\modelA,s) \bisimilar (\modelB,t)$. The proof proceeds by induction on the structure of $\varphi$.
    The case for the atomic formulas is straightforward by definition, while the cases of Boolean operators are standard. Then, we focus on the case for formulas of the form $\Diamond\psi$. Suppose $ \modelA,s \models \Diamond \psi$. By definition this means that there is $a \in A$ such that $\last(s) \prec a$ and $\modelA,sa \models \psi$. By  (zig), there is $b \in B$ such that $\last(t) \prec b$ and $(\modelA,sa) \bisimilar (\modelB,tb)$. Then, by inductive hypothesis, we have that $(\modelB,tb) \models \psi$, which allows us to conclude that $(\modelB,t) \models \Diamond \psi$. 
    The converse implication is analogous, using (zag) instead.
\end{proof}

%% file: ultrafilter.tex
\section{Hennessy--Milner Classes and Ultrafilter Extensions}
\label{sec:ufe}

\subsection{Hennessy--Milner classes}

As is well-known in the case of Basic Modal Logic, the converse of~Thm.~\ref{th:bisImplyEquiv} does not hold in general. A \defstyle{Hennessy--Milner class} is a class of structures for which it does. In our setting, this means that for any $\modelA$ and $\modelB$ in the class and for any $s \in A^+$ and $t \in B^+$, $(\modelA, s) \equiv (\modelB, t)$ implies $(\modelA, s) \bisimilar (\modelB ,t)$. A classical example of a Hennessy--Milner class for $\bml$ is that of \defstyle{finitely branching structures}, i.e., structures such that for every point $a$, the number of successors of $a$ is finite~\cite{Hennessy&Milner}. As shown in~\cite{FigGoren25}, this class also works for $\ppml$ with single-point semantics (where successors are taken with respect to the special relational symbol $E$). We now observe that this result generalizes to valuation semantics. 

\begin{theorem} \label{th:HM}
Let $\modelA$ and $\modelB$ be finitely branching $\sigma$-structures and let $s\in A^+$ and $t\in B^+$. Then $(\modelA,s) \equiv (\modelB,t)$ implies $(\modelA,s) \bisimilar (\modelB,t)$.
\end{theorem}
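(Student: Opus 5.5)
We take as candidate bisimulation the relation $Z := \{(s',t') \in A^+\times B^+ \mid \card{s'}=\card{t'} \text{ and } (\modelA,s')\equiv(\modelB,t')\}$. By hypothesis $(s,t)\in Z$, so $Z$ is nonempty, and it relates only sequences of the same length. Once we check that $Z$ is a $\ppml$-bisimulation, we get $(\modelA,s)\bisimilar(\modelB,t)$.

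Condition (pred) is immediate, since each $R\in\bar\sigma$ is itself a $\ppml$-formula and $\equiv$ preserves it.

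For (zig), take $(s',t')\in Z$ and $a$ with $\last(s')\prec a$. Suppose, towards a contradiction, that no $E$-successor $b$ of $\last(t')$ satisfies $s'a\equiv t'b$. First, $\last(t')$ has at least one successor. Indeed, $\modelA,s'\models\Diamond\top$, so by equivalence $\modelB,t'\models\Diamond\top$. By finite branching of $\modelB$, the successors of $\last(t')$ form a finite nonempty list $b_1,\dots,b_k$. For each $i$ there is a formula $\psi_i$ with $\modelA,s'a\models\psi_i$ and $\modelB,t'b_i\not\models\psi_i$; if the discrepancy goes the other way, we replace the formula by its negation. Set $\psi:=\psi_1\land\dots\land\psi_k$. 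Then $\modelA,s'\models\Diamond\psi$, while $\modelB,t'\not\models\Diamond\psi$, because every successor $b_i$ falsifies $\psi_i$. This contradicts $s'\equiv t'$. Hence some $b$ with $\last(t')\prec b$ has $s'a\equiv t'b$, and $(s'a,t'b)\in Z$ since the lengths still agree. Condition (zag) is symmetric, using finite branching of $\modelA$.

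No step is a real obstacle. The one point to watch is that equivalence must be taken over \emph{all} formulas, not only those of small modal debt. The formulas $\psi_i$ are evaluated at length $\card{s'}+1$ and may involve arbitrary atoms, and $\equiv$ as defined quantifies over all of $\ppml$, so this is fine. Lemma~\ref{lem:debt-normalization} is not needed here, though it shows that restricting to debt at most $\card{s'}-1$ would give the same relation. The remaining subtlety is the empty successor set, which is handled above by the $\Diamond\top$ argument.
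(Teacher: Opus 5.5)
Your proof is correct and takes essentially the same route as the paper. Both arguments show that $\equiv$ between same-length sequences is itself a $\ppml$-bisimulation, and both prove (zig) by contradiction. The ingredients are the same: finite branching, nonemptiness of the successor set via $\Diamond\top$, and the distinguishing formula $\Diamond(\psi_1\land\dots\land\psi_k)$.
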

\begin{proof}
The idea of the proof is to show that the relation $\equiv$ is a bisimulation.  
The (pred) case is direct. 
For the (zig) case, let $s,t$ be sequences of $A^+$ and $B^+$ respectively, such that $s \equiv t$ and let $a \in A$ such that $\last(s) \prec a$. 

Aiming for a contradiction, suppose that there is no $b$ such that $\last(t) \prec b$ and $sa \equiv tb$. Let $\Delta$ be the set of successors of $\last(t)$. By hypothesis, this set is finite. Moreover, it is nonempty, since we have $\modelA, s \models \Diamond \top$ and, by inductive hypotheses, $\modelB, t \models \Diamond \top$. Let $\Delta=\{b_1,\ldots,b_n\}$. Since there is no successor $b$ such that $sa \equiv tb$, there must be a formula $\psi_i$ such that $\modelA,sa  \models \psi_i$ and $\modelB,tb_i \not \models \psi_i$ for each $i \in \{1,\ldots,n\}$. But then we have that $\modelA, s \models \Diamond(\bigwedge\limits_{i=1}^n \psi_i)$, while also $\modelB, t \not \models \Diamond(\bigwedge\limits_{i=1}^n \psi_i)$, contradicting the assumption $s \equiv t$.  
The (zag) case is analogous to (zig).
\end{proof}

Arguably, finite branching is quite a strong restriction on models, motivating the  developments in the forthcoming sections. We will consider two different approaches: on one hand, we will characterize a more general Hennessy--Milner class, whose members we call $\ppml$-saturated structures. On the other hand, we will present an ultrafilter extension construction which allows us to transform models into others having the desired property.

\subsection{Saturated structures}

We now introduce a notion of saturated $\sigma$-structures which is appropriate for $\ppml$. This is a straightforward variation of so-called \emph{modally-saturated} or \emph{m-saturated} models~\cite[Def. 2.53]{Blackburn&deRijke&Venema01}. In what follows, given a $\sigma$-structure $\modelA$, a binary relation $R$ over $A$ and a sequence $s\in A^+$, we write
\[
    R(s) \coloneqq \set{sa \in A^+ \mid (\last(s), a)\in R}.
\]
for the set of ``successor sequences'' with respect to the relation $R$.

\begin{definition}\label{def:saturation}
Let $\modelA$ be a $\sigma$-structure. We say that a set of formulas $\Sigma$ is \defstyle{satisfiable by a subset $X \subseteq A^+$} if there exists a sequence $s \in X$ such that $\modelA,s \models \phi$ for every $\phi \in \Sigma$. We say $\Sigma$ is \defstyle{finitely satisfiable by a subset $X$} if every finite subset of $\Sigma$ is satisfiable by $X$.

We say that $\modelA$ is \defstyle{$\ppml$-saturated} if for every sequence $s \in A^+$ and every set $\Sigma$ of $\ppml$ formulas, if $\Sigma$ is finitely satisfiable by $E^\modelA(s)$ then it is satisfiable by $E^\modelA(s)$.
\end{definition}

As observed in~\cite{Blackburn&deRijke&Venema01}, saturation indicates a sort of \emph{compactness property}: if a set of formulas can be finitely satisfied by the set of successors of a sequence, then the whole set can also be satisfied there. This compactness property implies the Hennessy--Milner property for the class, as we now show.

\begin{theorem} \label{th:saturationImplHM}
The class of $\ppml$-saturated $\sigma$-structures has the Hennessy--Milner property.
\end{theorem}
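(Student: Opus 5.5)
We will show that, for $\ppml$-saturated $\sigma$-structures $\modelA$ and $\modelB$, the relation $Z=\{(s,t)\in A^+\times B^+ \mid \card{s}=\card{t} \text{ and } (\modelA,s)\equiv(\modelB,t)\}$ is itself a $\ppml$-bisimulation. Then $(\modelA,s)\equiv(\modelB,t)$ immediately yields $(\modelA,s)\bisimilar(\modelB,t)$. The argument follows the classical proof for m-saturated models in $\bml$, with successor points replaced by successor sequences $E^\modelA(s)$.

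The clause (pred) is immediate, since each $R\in\bar\sigma$ is itself a $\ppml$-formula, so $\equiv$ already gives $\modelA,s\models R$ iff $\modelB,t\models R$.

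For (zig), fix $(s,t)\in Z$ and $a\in A$ with $\last(s)\prec a$. Let $\Sigma=\{\psi\in\ppml \mid \modelA,sa\models\psi\}$ be the full theory of $sa$. We claim $\Sigma$ is finitely satisfiable by $E^\modelB(t)$. Take a finite $\{\psi_1,\ldots,\psi_k\}\subseteq\Sigma$. Then $\modelA,s\models\Diamond\bigwedge_i\psi_i$, so $\modelB,t\models\Diamond\bigwedge_i\psi_i$ by $s\equiv t$. Hence some $b'$ with $\last(t)\prec b'$ gives $tb'\in E^\modelB(t)$ satisfying all $\psi_i$. By $\ppml$-saturation of $\modelB$ (Def.~\ref{def:saturation}), some $tb\in E^\modelB(t)$ satisfies all of $\Sigma$. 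This means $\last(t)\prec b$ and every formula true at $sa$ is true at $tb$.

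To conclude $sa\equiv tb$ we still need the converse inclusion, which follows from closure of $\Sigma$ under negation. If $\modelB,tb\models\phi$ but $\modelA,sa\not\models\phi$, then $\lnot\phi\in\Sigma$, so $\modelB,tb\models\lnot\phi$, a contradiction. Since also $\card{sa}=\card{tb}$, we get $(sa,tb)\in Z$. The clause (zag) is symmetric, using saturation of $\modelA$.

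There is no real obstacle here. The only point needing care is that satisfiability is taken over the sequence set $E^\modelB(t)$, which contains only extensions of $t$ itself, rather than over points. The diamond formula is evaluated at $t$ and therefore produces exactly such an extension, so the witness lives in the right set. No use of Lemma~\ref{lem:debt-normalization} appears to be needed, because we work with full equivalence $\equiv$ rather than $\equiv_n$.
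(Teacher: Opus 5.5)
Your proposal is correct and follows essentially the same route as the paper. Both arguments show that $\equiv$ itself is a bisimulation: take the full theory $\Sigma$ of $sa$, use $\Diamond\bigwedge\Delta$ together with $s\equiv t$ to get finite satisfiability by $E^\modelB(t)$, and then invoke $\ppml$-saturation of $\modelB$. Your explicit negation-closure step, which upgrades inclusion of theories to $sa\equiv tb$, just spells out a detail the paper leaves implicit.
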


\begin{proof}
The proof proceeds similarly as for~Thm.~\ref{th:HM}, by showing that equivalence between $\ppml$-saturated structures is a bisimulation relation.

Let $\modelA$ and $\modelB$ be $\ppml$-saturated $\sigma$-structures, and let $s \in A^+$ and $t \in B^+$. 
The (pred) condition is trivial by the definition of equivalence. 
We will prove that $\equiv$ satisfies the (zig) condition of the definition of bisimulation. Suppose that $s \equiv t$ and that $\last(s) \prec a$. Let $\Sigma = \set{\phi \mid \modelA, sa \models \phi }$. 
Note that for every finite $\Delta \subseteq \Sigma$ we have $\modelA, s \models  \Diamond \left(\bigwedge \Delta\right)$. By applying inductive hypothesis, we have $\modelB, t \models \Diamond \left(\bigwedge \Delta\right)$, which implies that $\modelB, tb_{\Delta} \models  \bigwedge \Delta$ for some $b_{\Delta}$ such that $\last(t) \prec b_{\Delta}$. This means that $\Sigma$ is finitely satisfiable in $E^\modelB(t)$, then, by $\ppml$-saturation, there exists $tb \in B^+$ such that $tb \models \phi$ for every $\phi \in \Sigma$, hence $sa \equiv tb$. 
The proof of (zag) is  analogous.
\end{proof}

\subsection{Ultrafilter Extensions}

We now develop a notion of ultrafilter extension appropriate for $\ppml$. This model-theoretical construction essentially completes a model so as to make it saturated. Although we follow closely the classical presentation for $\bml$ as in e.g.~\cite{Blackburn&deRijke&Venema01}, we will need to deal with the intrinsic particularities of $\ppml$, which we will discuss along the way.


\begin{definition}\label{def:completion-ops}
Let $R$ be a binary relation on a set $A$. We define the following operations on $\powerset(A^+)$:
\[
\begin{array}{lcl}
    m_R(X) & \coloneqq & \{ s \in A^+ \mid \text{there is some } a \in A \text{ such that } sa \in X  \text{ and }(\last(s),a) \in R \} \\
    & = & \set{s\in A^+ \mid R(s) \cap X \neq \emptyset} \\
    l_R(X) & := & \{ s \in A^+ \mid \text{for all } a \in A, \text{ if } (\last(s),a) \in R, \text{ then } sa \in X \}\\
    & = & \set{s\in A^+ \mid R(s) \subseteq X}.
\end{array}
\]
\end{definition}

In the BML case, when interpreted over a model, $m_R(X)$ represents the set of points that \emph{can see} a state in $X$, while $l_R(X)$ are those that \emph{only see} points from $X$, in the model.
In the case of $\ppml$ the interpretation is similar, replacing points with sequences of points.



The next proposition states formally some intended facts: first, that $m_{E^{\modelA}}$ provides an alternative way to characterize the extension of a formula; second, that $m_R$ and $l_R$ are dual operators; and third, that $l_R$ distributes over intersections, analogously to the typical behaviour between $\Box$-like modalities and conjunctions.

\begin{proposition}\label{prop:l&mProperties}
    Let $\modelA=(A,\cdot^{\modelA})$ be a $\sigma$-structure, and let $R$ be a binary relation over $A$. Then for every $X\subseteq A^+$ and every collection $(X_i)_{i\in I}$ of subsets $X_i \subseteq A^+$, the following properties hold:
\begin{enumerate}
    \item $\val{\Diamond \phi}{\modelA} = m_{E^\modelA}( \val{\phi}{\modelA}) $,
    \item  $l_R(X) = A^+ \setminus m_R(A^+ \setminus X)$, and
    \item $\bigcap_{i\in I} l_R(X_i) = l_R (\bigcap_{i \in I} X_i)$.
\end{enumerate}
\end{proposition}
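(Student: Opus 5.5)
All three items follow by unfolding the definitions and doing elementwise set reasoning. No earlier results are needed beyond the semantic clauses and Definition~\ref{def:completion-ops}.

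For item~1 we compare membership of an arbitrary $s\in A^+$ on both sides. By the semantic clause for $\Diamond$, $s\in\val{\Diamond\phi}{\modelA}$ iff there is $a\in A$ with $\last(s)\prec a$ and $\modelA,sa\models\phi$. Since $\last(s)\prec a$ abbreviates $(\last(s),a)\in E^\modelA$, and $\modelA,sa\models\phi$ means $sa\in\val{\phi}{\modelA}$, this is verbatim the defining condition of $m_{E^\modelA}(\val{\phi}{\modelA})$. The only point to note is that valuations are nonempty, so $s$ and $sa$ both lie in $A^+$ and no boundary case arises.

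For item~2 we use the second descriptions in Definition~\ref{def:completion-ops}. We have $s\in l_R(X)$ iff $R(s)\subseteq X$, iff $R(s)\cap(A^+\setminus X)=\emptyset$. The last equivalence holds because $R(s)\subseteq A^+$. By definition, $R(s)\cap(A^+\setminus X)=\emptyset$ iff $s\notin m_R(A^+\setminus X)$, which gives the claim.

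For item~3 we again argue elementwise: $s\in\bigcap_{i\in I} l_R(X_i)$ iff $R(s)\subseteq X_i$ for every $i\in I$, iff $R(s)\subseteq\bigcap_{i\in I}X_i$, iff $s\in l_R(\bigcap_{i\in I}X_i)$. The only delicate point is the case $I=\emptyset$. Both sides should then be read as $A^+$, the intersection of an empty family taken inside the ambient set $A^+$. With that convention the case holds, because $l_R(A^+)=A^+$. Since all three items are purely definitional, I expect no step to be a genuine obstacle; the care needed is in fixing the empty-intersection convention and in remembering that every successor sequence lies in $A^+$.
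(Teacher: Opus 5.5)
Your proposal is correct and follows the paper's proof essentially verbatim: elementwise unfolding of the semantic clause for $\Diamond$, and of the descriptions $R(s)\cap X\neq\emptyset$ and $R(s)\subseteq X$ from Definition~\ref{def:completion-ops}. Your remark on the empty index family (both sides read as $A^+$, using $l_R(A^+)=A^+$) is a harmless refinement that the paper leaves implicit.
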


\begin{proof}\phantom{}

\begin{enumerate}
\item
$s \in \val{\Diamond \phi}{\modelA}$ iff $s \models \Diamond \phi$, iff there exists $s' \in E^\modelA(s)$ such that $s' \in \val{\phi}{\modelA}$, iff $s \in m_{E^\modelA}(\val{\phi}{\modelA})$.
\item $s\in l_R(X)$ iff $R(s) \subseteq X$ iff $R(s) \cap (A^+ \setminus X) = \emptyset$ iff $s \not\in m_R(A^+ \setminus X)$.
\item $s\in \bigcap_{i\in I} l_R(X_i)$ iff $R(s) \subseteq \bigcap_i X_i$ iff $s \in l_R (\bigcap_{i\in I} X_i)$.\qedhere
    \end{enumerate}
\end{proof}


Now we introduce the main construction of this section: ultrafilter extensions. In order to do so, let us briefly recall the notions of filter and ultrafilter (see, e.g.,~\cite{willard2012} for details). For a set $S\neq\emptyset$, a \defstyle{filter} over $S$ is a set $F\subseteq\powerset(S)$ such that the following closure properties hold: (i) $S\in F$, (ii) $X,Y\in F$ implies $X\cap Y\in F$, and (iii) $X\in F$ and $X\subseteq Y\subseteq S$ imply $Y\in F$. Whenever $F\neq\powerset(S)$, we say $F$ is \defstyle{proper}. If in addition for all $X\in\powerset(S)$, it holds that $X\in F$ iff $S\setminus X \not\in F$, then $F$ is an \defstyle{ultrafilter}.

\begin{definition}[Ultrafilter extension]\label{def:uf-extension}
Let $\modelA$ be a $\sigma$-structure. The \defstyle{ultrafilter extension} of $\modelA=(A,\cdot^{\modelA})$, denoted by $\ueA$, is the $\sigma$-structure defined as follows:
\begin{enumerate}
    \item Its domain is the set of ultrafilters over $A^+$.
    \item $(u,u^\prime) \in E^{\ueA}$ if and only if for all $X \in u^\prime$ we have $m_{E^{\modelA}}(X) \in u$. 
    \item For $R \in \bar{\sigma}$ of arity $n$, $(u_1, \ldots, u_n) \in R^{\ueA}$ if and only if $\val{R}{\modelA} \in u_n$.
\end{enumerate}
\end{definition}

Notice that ultrafilters are built over sequences of elements from~$A^+$, instead of elements of $A$, in order to match the semantics of $\ppml$. In particular, the construction does \emph{not} reduce to the standard $\bml$ construction for unimodal signatures, and it gives an infinite structure even if the original structure is finite. This is a particularity of the $\ppml$ setting. Interestingly, to the best of our knowledge, there are no other ultrafilter constructions in the literature based on sequences or strings.

Given a binary relation $R$ over $A$ and an ultrafilter $u$ over $A^+$, let
$$l_{R}^{-1}[u] \coloneqq \{Y \subseteq A^+ \mid l_{R}(Y)\in u\}.$$
In words, $l_{R}^{-1}$ is the preimage function $f^{-1}: \powerset(S') \to \powerset(S)$ for a function $f: S \to S'$ when $f = l_{R}$.
The following lemma will be useful to produce sucessors in an ultrafilter extension.

\begin{lemma}\label{lemma:carac-ufrelation}
Let $\modelA$ be a $\sigma$-structure and let $u$ and $v$ be ultrafilters over $A^+$. Then
\[
    (u,v) \in E^{\ueA}
    \quad\iff\quad
    l_{E^\modelA}^{-1}[u] \subseteq v.
\]
\end{lemma}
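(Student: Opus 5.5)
I will prove the equivalence directly from Definition~\ref{def:uf-extension}, using the duality between $m_{E^\modelA}$ and $l_{E^\modelA}$ from Proposition~\ref{prop:l&mProperties}(2) together with the basic ultrafilter property that exactly one of $X$ and its complement lies in an ultrafilter. Throughout, write $m = m_{E^\modelA}$ and $l = l_{E^\modelA}$, and write $\overline{Y} = A^+ \setminus Y$ for complements.

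\emph{Left to right.} Assume $(u,v)\in E^{\ueA}$, so $m(X)\in u$ for every $X\in v$. Let $Y\in l^{-1}[u]$, that is, $l(Y)\in u$; I must show $Y\in v$. Suppose not. Since $v$ is an ultrafilter, $\overline{Y}\in v$, and therefore $m(\overline{Y})\in u$. By Proposition~\ref{prop:l&mProperties}(2), $l(Y)=\overline{m(\overline{Y})}$. So both $m(\overline{Y})$ and its complement belong to $u$, which contradicts the fact that $u$ is an ultrafilter (exactly one of a set and its complement lies in $u$).

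\emph{Right to left.} Assume $l^{-1}[u]\subseteq v$ and let $X\in v$; I must show $m(X)\in u$. Suppose not. Then $\overline{m(X)}\in u$. Using Proposition~\ref{prop:l&mProperties}(2) with $Y=\overline{X}$, we have $\overline{m(X)}=l(\overline{X})$, so $l(\overline{X})\in u$. Hence $\overline{X}\in l^{-1}[u]\subseteq v$. Now both $X$ and $\overline{X}$ lie in $v$, contradicting that $v$ is an ultrafilter.

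The argument is routine; the only point needing care is the complement bookkeeping in Proposition~\ref{prop:l&mProperties}(2). In particular, $l$ and $m$ act on subsets of $A^+$, not of $A$, so all complements above are taken within $A^+$, which is the underlying set of the ultrafilters. No other obstacle arises.
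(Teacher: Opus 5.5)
Your proof is correct and follows essentially the same route as the paper. Both directions rest on the duality $l(Y)=A^+\setminus m(A^+\setminus Y)$ and on the fact that an ultrafilter contains exactly one of a set and its complement; the only cosmetic difference is that you argue each direction by contradiction, whereas the paper writes it as a chain of equivalences using the contrapositive form of the definition of $E^{\ueA}$.
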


\begin{proof}
To simplify notation, in the context of this proof let us write $m$ and $l$ for $m_{E^{\modelA}}$ and $l_{E^{\modelA}}$, respectively, and for any $X \subseteq A^+$ let $X^c \coloneqq A^+\setminus X$. Begin by noticing that $(u,v)\in E^{\ueA}$ if and only if for all $X \in A^+$, $m(X)\not \in u$ implies $X \not \in v$ (contrapositive of the definition).

For $\Rightarrow$), suppose $m(X)\not \in u$ implies $X \not \in v$ for all $X \in A^+$. Then,
            $Y \in l_{E^\modelA}^{-1}[u]$ iff $l(Y)\in u$ iff (Prop.~\ref{prop:l&mProperties}(2))  $m(Y^c)^c \in u$ iff  $m(Y^c) \not\in u$ implies (hyp.) $Y^c \not\in v$ iff  $Y \in v$.

For $\Leftarrow$), take as hypothesis that $l_{E^\modelA}^{-1}[u]\subseteq v$. Then,
            $m(X) \not\in u$ iff $m(X)^c \in u$ 
            iff (Prop.~\ref{prop:l&mProperties}(2)) $l(X^c) \in u$ 
            iff $X^c \in l_{E^\modelA}^{-1}[u]$
            implies $X^c \in v$
            iff  $X \not\in v$.        \qedhere




\end{proof}

Let us now characterize the semantics of $\ppml$ formulas in an ultrafilter extension $\ueA$ in terms of their semantics in $\modelA$.




\begin{proposition}\label{prop:corresp-ultrafilterZ}
Let $\modelA$ be a $\sigma$-structure, let $\phi \in \ppml$, let $n \in \N$ with $n > \D{\phi}$, and let $\bar u = u_1,\dots,u_n$ be a sequence of ultrafilters over $A^+$. 
Then $\ueA,u_1\cdots u_n \models \phi$ iff  $\val{\phi}{\modelA}\in u_n$.
\end{proposition}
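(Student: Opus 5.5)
We argue by structural induction on $\phi$, proving the claim simultaneously for all $n>\D{\phi}$ and all sequences $u_1\cdots u_n$ of ultrafilters over $A^+$. The point of quantifying over all such $n$ is that the $\Diamond$ case increases the length of the valuation, so the inductive hypothesis is needed at length $n+1$. Throughout we write $m$ and $l$ for $m_{E^\modelA}$ and $l_{E^\modelA}$.

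\emph{Base and Boolean cases.} For $\top$ we have $\val{\top}{\modelA}=A^+\in u_n$, so both sides hold. For an atom $R$, the hypothesis $n>\D{R}=\arity(R)-1$ gives $\arity(R)\le n$. Hence $\ueA,\bar u\models R$ iff $\last_{\arity(R)}(\bar u)\in R^{\ueA}$. By clause 3 of Def.~\ref{def:uf-extension}, which only looks at the last component, this holds iff $\val{R}{\modelA}\in u_n$. Negation and conjunction preserve modal debt, so the inductive hypothesis applies at the same $n$. For negation we use $\val{\neg\phi}{\modelA}=A^+\setminus\val{\phi}{\modelA}$ together with the ultrafilter property. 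For conjunction we use $\val{\phi\land\psi}{\modelA}=\val{\phi}{\modelA}\cap\val{\psi}{\modelA}$ together with closure of filters under intersections and supersets.

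\emph{Diamond case.} From $\D{\Diamond\phi}=\max\{0,\D{\phi}-1\}<n$ we get $\D{\phi}<n+1$, so the inductive hypothesis applies to $\phi$ at length $n+1$. Thus $\ueA,\bar u\models\Diamond\phi$ iff there is $v$ with $(u_n,v)\in E^{\ueA}$ and $\val{\phi}{\modelA}\in v$. By Prop.~\ref{prop:l&mProperties}(1) we must show that this holds iff $m(\val{\phi}{\modelA})\in u_n$. The left-to-right direction is immediate from clause 2 of Def.~\ref{def:uf-extension}.

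\emph{Main obstacle: the right-to-left direction.} It suffices to prove an existence lemma: if $X\subseteq A^+$ and $m(X)\in u$, then there is an ultrafilter $v$ with $X\in v$ and $l^{-1}[u]\subseteq v$. Lemma~\ref{lemma:carac-ufrelation} then gives $(u,v)\in E^{\ueA}$.

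We show that $l^{-1}[u]\cup\{X\}$ has the finite intersection property and then extend it to an ultrafilter by the ultrafilter lemma. First, $l^{-1}[u]$ is closed under finite intersections, since by Prop.~\ref{prop:l&mProperties}(3) we have $l(Y_1\cap Y_2)=l(Y_1)\cap l(Y_2)\in u$. It therefore suffices to show $Y\cap X\neq\emptyset$ for every $Y\in l^{-1}[u]$. Suppose instead $Y\subseteq A^+\setminus X$. The operator $l$ is monotone, so $l(A^+\setminus X)\supseteq l(Y)$, and hence $l(A^+\setminus X)\in u$. By Prop.~\ref{prop:l&mProperties}(2), $l(A^+\setminus X)=A^+\setminus m(X)$. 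This contradicts $m(X)\in u$, because $u$ is proper.

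Finally, note that the earlier components $u_1,\dots,u_{n-1}$ play no role in this argument, which is consistent with the statement depending only on $u_n$.
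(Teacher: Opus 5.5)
Your proof is correct and follows the same route as the paper. Both argue by structural induction over all lengths $n>\D{\phi}$ simultaneously, and both handle the right-to-left $\Diamond$ case by extending $l^{-1}[u_n]\cup\{\val{\phi}{\modelA}\}$ to an ultrafilter and then invoking Lemma~\ref{lemma:carac-ufrelation}.

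The only difference is in checking the finite intersection property. You use monotonicity of $l$ and the duality $l(A^+\setminus X)=A^+\setminus m(X)$, whereas the paper picks a witness $s\in l(Y)\cap\val{\Diamond\phi}{\modelA}$. Your reduction to $Y\cap X\neq\emptyset$ also avoids the paper's unneeded remark that $l(\emptyset)=\emptyset$, which is false whenever $\modelA$ has an $E$-dead end.
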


\begin{proof}
We simplify notation as in the proof of Lemma~\ref{lemma:carac-ufrelation} by writing $m$ and $l$ for $m_{E^{\modelA}}$ and $l_{E^{\modelA}}$, respectively, and $X^c$ for the complement in $A^+$ of any $X \subseteq A^+$. We also write $\ext{\phi}$ for the extension $\val{\phi}{\modelA}$ of any $\phi \in \ppml$. We proceed by induction on the structure of $\phi$. 

\noindent\textbf{Case $\phi=\top$.} It is always the case that $\bar u \models \top$ and $\ext{\top} = A^+ \in u$.

\noindent\textbf{Case $\phi=R$.} Let $r=\arity(R)$. Then
    $$ \bar u \models R \IFF \last_r(\bar u)\in R^{\ueA} \IFF r\leq n \AND \ext{R}\in u_n \IFF \ext{R} \in u_n.$$

\noindent\textbf{Case $\phi=\psi_1\land\psi_2$.} We have
    $$\bar u \models \psi_1\land\psi_2 \ \iff \ \bar u \models \psi_1 \AND \bar u \models \psi_2 \ \iff \ \ext{\psi_1} \in u_n \AND \ext{\psi_2}\in u_n \ \iff\ \ext{\psi_1\land\psi_2} = \ext{\psi_1}\cap\ext{\psi_2} \in u_n$$
    where the last ``if and only if'' follows from the fact that $u_n$, being a filter, is closed upwards and under finite intersections.

\noindent\textbf{Case $\phi=\neg\psi$.} We have $\bar u \models \lnot \psi$ iff $\bar u \not\models \psi$, iff $\ext{\psi}\not\in u_n$, iff $\ext{\lnot\psi} = \ext{\psi}^c \in u_n$.

\noindent\textbf{Case $\phi=\Diamond\psi$.} We have  $\bar u \models \Diamond \psi$ iff $\exists u \text{ ultrafilter on $A^+$ such that } u_n \prec u \AND \bar u u\models \psi$, iff $\exists u$ such that $\forall X \in u. m(X)\in u_n \AND \ext{\psi}\in u$, where in the last equivalence we used both the inductive hypothesis and the definition of $u_n \prec u$ in an ultrafilter extension. Now by Prop.~\ref{prop:l&mProperties}(1), this last statement readily implies $m(\ext{\psi}) = \ext{\Diamond \psi} \in u_n$; let us see that the converse implication holds as well.
    
We must produce a certain successor $u_n \prec u$ from the knowledge that $\Diamond\psi \in u_n$.
Consider the collection of subsets $X \coloneqq l^{-1}[u_n] \cup \{\ext{\psi}\}$. Recall that, by the well-known ultrafilter theorem, if $X$ has the finite intersection property (every finite collection of subsets in $X$ has finite intersection) then there exists an ultrafilter $u$ over $A^+$ with $X \subseteq u$. Let us see that $X$ indeed has the finite intersection property.

Let $\Delta = \{Y_1,\dots,Y_r\} \subseteq X$. If $\ext{\psi}\not\in \Delta$, then $\Delta \subseteq l^{-1}[u_n]$, i.e. $l(Y_i)\in u_n$ for all $i$, then $\cap_i l(Y_i) = l(\cap_i Y_i) \in u_n$, where we used Prop.~\ref{prop:l&mProperties}(3). This means that $l(\cap_i Y_i)\neq \emptyset$, which in turn implies $\cap_i Y_i \neq\emptyset$ (since $l(\emptyset) = \{s \in A^+ \mid E(s) \subseteq \emptyset\} = \emptyset$). Otherwise, if $\ext{\psi} \in \Delta$, then it is enough to show that $Y \cap \ext{\Diamond \psi} \neq\emptyset$ for all $Y \in l^{-1}[u_n]$. Given such $Y$, $l(Y) \in u_n$, then since by hypothesis $\ext{\Diamond\psi}\in u_n$, we get $l(Y)\cap \ext{\Diamond\psi} \in u_n$, which implies that the intersection is nonempty. Let $s \in l(Y)\cap \ext{\Diamond\psi}$. Then in particular $\exists s' \in E^\modelA(s)$ such that $s' \models \psi$; however $E^\modelA(s)\subseteq Y$ since $s \in l(Y)$, hence $s' \in Y \land \ext{\psi} \neq\emptyset$.

Since $X$ has the finite intersection property, there is an ultrafilter $u$ such that $X \subseteq u$. Since $u$ contains $l^{-1}[u_n]$, by Lemma~\ref{lemma:carac-ufrelation} we have $u_n \prec u$, and 
since $\ext{\psi}\in u$, by the inductive hypothesis applied on sequences of length $n+1$, we get that $\bar u u \models \psi$, which is to say $\bar u \models \Diamond \psi$.
\end{proof}

Prop.~\ref{prop:corresp-ultrafilterZ} might be surprising, since the satisfaction relation for an ultrafilter extension depends exclusively on the last ultrafilter in the valuation, assuming that the valuation is sufficiently long; however this can already be expected from how the ultrafilter extension is defined.

We remark in passing that there is a sense in which the ultrafilter extension $\ueA$ of a $\sigma$-structure $\modelA$ can be viewed as a Kripke model; moreover our definition of $\ueA$ can be understood as an application of the standard $\bml$ ultrafilter extension construction, but applied ``somewhere else'', i.e. in a Kripke model constructed from $\modelA$ in a certain way. This gives a conceptual explanation for Prop.~\ref{prop:corresp-ultrafilterZ}. We leave the exposition and exploration of this interpretation to future work.

Recall that given $s\in A^+$, the \defstyle{principal ultrafilter generated by $s$}, which we denote by $\pi_s$, is the smallest ultrafilter containing the singleton set $\{s\}$, i.e. $\pi_s = \{ X \subseteq A^+ \mid s \in X\}$. By identifying principal ultrafilters we can obtain the following invariance result.


\begin{lemma}\label{lemma:equivUF}
Let $\modelA=(A,\cdot^{\modelA})$ be a $\sigma$-structure and let $s \in A^+$. Then
\[
(\modelA,s) \equiv (\ueA,\bar u \pi_s),
\]
for every sequence $\bar u$ of ultrafilters such that $\card{\bar u} = \card{s}-1$.
\end{lemma}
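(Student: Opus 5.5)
Set $n \coloneqq \card{s}$, so that $\bar u \pi_s$ also has length $n$. To show $(\modelA,s) \equiv (\ueA,\bar u \pi_s)$ we take an arbitrary $\phi \in \ppml$ and prove
\[
\modelA, s \models \phi \iff \ueA, \bar u \pi_s \models \phi .
\]
Proposition~\ref{prop:corresp-ultrafilterZ} cannot be applied to $\phi$ directly, because it needs $n > \D{\phi}$ and $\phi$ may have larger modal debt. We therefore first use Lemma~\ref{lem:debt-normalization} to pass to $\psi \coloneqq N_n(\phi)$. This formula satisfies $\D{\psi} \leq n-1 < n$, and it agrees with $\phi$ on every valuation of length $n$ in every $\sigma$-structure. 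In particular, the lemma applies both in $\modelA$ to the valuation $s$ and in $\ueA$ to the valuation $\bar u \pi_s$, since both have length $n$.

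We then obtain the chain
\[
\modelA, s \models \phi
\iff \modelA, s \models \psi
\iff s \in \val{\psi}{\modelA}
\iff \val{\psi}{\modelA} \in \pi_s
\iff \ueA, \bar u \pi_s \models \psi
\iff \ueA, \bar u \pi_s \models \phi .
\]
The first and last steps are Lemma~\ref{lem:debt-normalization}, applied in $\modelA$ and in $\ueA$ respectively. The second step is the definition of extension. The third is the definition of the principal ultrafilter $\pi_s$. The fourth is Proposition~\ref{prop:corresp-ultrafilterZ}, applied to $\psi$ with the sequence $u_1 \cdots u_{n-1} \pi_s$, which is legitimate because $n > \D{\psi}$. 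Note that the first $n-1$ ultrafilters $\bar u$ are arbitrary, and Proposition~\ref{prop:corresp-ultrafilterZ} only inspects the last one, which is why the statement holds for every such $\bar u$.

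The main obstacle is the debt mismatch. A formula such as $\neg R$ with $\arity(R) > n$ is evaluated on short valuations by the length check in the semantics, and Proposition~\ref{prop:corresp-ultrafilterZ} does not cover it. The normalization $N_n$ removes exactly these atoms, replacing them by $\bot$, and Lemma~\ref{lem:debt-normalization} guarantees this is harmless at length $n$. So it remains only to confirm that the lemma is stated uniformly for all $\sigma$-structures, which it is, and hence applies to $\ueA$ as well.
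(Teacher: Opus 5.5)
Your proposal is correct and follows essentially the same route as the paper: both reduce to formulas of modal debt at most $n-1$ via Lemma~\ref{lem:debt-normalization}, then chain through $s \in \val{\psi}{\modelA}$, membership in $\pi_s$, and Proposition~\ref{prop:corresp-ultrafilterZ}. The only cosmetic difference is that you apply the normalization $N_n$ to each formula separately (the lemma's first claim). The paper instead uses the lemma's second claim to replace $\equiv$ by $\equiv_n$ at the outset.
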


\begin{proof}
    Let $n\coloneqq |s|$ and $\bar u$ be such that $|\bar u| = |s|-1$. Then by Lemma~\ref{lem:debt-normalization}, we want to prove that $(\modelA, s) \equiv_n (\ueA, \bar u \pi_s)$. Let $\phi \in \ppml$ be a formula with $\D{\phi}\leq n-1$, then
    $$
    \modelA, s \models \phi \quad \iff \quad
     s \in \val{\phi}{\modelA} \quad \iff \quad
      \val{\phi}{\modelA} \in \pi_s \quad \iff \quad
       \ueA,\bar u \pi_s \models \phi
    $$
    where the last equivalence follows from Prop.~\ref{prop:corresp-ultrafilterZ}.
\end{proof}

With this lemma at hand, we are able to show that ultrafilter extensions are a way to complete a model in order to obtain a saturated model. This is stated in the following property.

\begin{proposition} \label{prop:UFesSaturado}
Let $\modelA$ be a $\sigma$-structure. Then $\ueA$ is a $\ppml$-saturated model.
\end{proposition}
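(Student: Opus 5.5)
The plan is to mimic the $\Diamond$-case in the proof of Prop.~\ref{prop:corresp-ultrafilterZ}. Fix a sequence of ultrafilters $\bar u = u_1\cdots u_n$ in $\ueA$ and a set $\Sigma$ of $\ppml$ formulas that is finitely satisfiable by $E^{\ueA}(\bar u)$. We need a single ultrafilter $u$ with $u_n \prec u$ and $\ueA,\bar u u\models\phi$ for every $\phi\in\Sigma$. As before, write $m,l$ for $m_{E^\modelA},l_{E^\modelA}$ and $\ext{\cdot}$ for $\val{\cdot}{\modelA}$.

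The first step handles the debt obstruction. Prop.~\ref{prop:corresp-ultrafilterZ} only describes satisfaction at valuations longer than the debt of the formula, and formulas in $\Sigma$ may have arbitrary debt. So we replace $\Sigma$ by $\Sigma' := \{N_{n+1}(\phi)\mid \phi\in\Sigma\}$. By Lemma~\ref{lem:debt-normalization}, each $N_{n+1}(\phi)$ has debt $\leq n$ and agrees with $\phi$ on all valuations of length $n+1$, which are exactly the members of $E^{\ueA}(\bar u)$. Hence $\Sigma'$ is still finitely satisfiable by $E^{\ueA}(\bar u)$, and a valuation in $E^{\ueA}(\bar u)$ satisfying $\Sigma'$ also satisfies $\Sigma$. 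Moreover, by Prop.~\ref{prop:corresp-ultrafilterZ}, for every $\psi\in\Sigma'$ and every ultrafilter $v$ we have $\ueA,\bar u v\models\psi$ iff $\ext{\psi}\in v$.

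The second step builds $u$. Let
\[
X := l^{-1}[u_n]\cup\{\ext{\psi}\mid \psi\in\Sigma'\}.
\]
We show that $X$ has the finite intersection property. Take finitely many $Y_1,\dots,Y_k\in l^{-1}[u_n]$ and a finite $\Delta\subseteq\Sigma'$, and put $\chi := \bigwedge\Delta$.
\begin{itemize}
\item By finite satisfiability there is $v$ with $u_n\prec v$ and $\bar u v\models\chi$. Thus $\bar u\models\Diamond\chi$.
\item Since $\D{\Diamond\chi}\leq \max\{0,n-1\} < n$, Prop.~\ref{prop:corresp-ultrafilterZ} applies at length $n$. Using Prop.~\ref{prop:l&mProperties}(1), it gives $m(\ext{\chi})=\ext{\Diamond\chi}\in u_n$.
\item By Prop.~\ref{prop:l&mProperties}(3), $l(\bigcap_i Y_i)=\bigcap_i l(Y_i)\in u_n$. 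Hence $l(\bigcap_i Y_i)\cap m(\ext{\chi})\in u_n$, so this set is nonempty.
\item Pick $s$ in it. Some $s'\in E^\modelA(s)$ satisfies $\chi$. Since $s\in l(\bigcap_i Y_i)$, also $s'\in\bigcap_i Y_i$. So $\bigcap_i Y_i\cap\bigcap_{\psi\in\Delta}\ext{\psi}\neq\emptyset$.
\end{itemize}
The cases with no $Y_i$ or with $\Delta=\emptyset$ are covered by taking the empty intersection to be $A^+$ and $\chi=\top$.

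The third step concludes. By the ultrafilter theorem, some ultrafilter $u$ contains $X$. Lemma~\ref{lemma:carac-ufrelation} gives $u_n\prec u$, since $l^{-1}[u_n]\subseteq u$. Since $\ext{\psi}\in u$ for every $\psi\in\Sigma'$, the first step gives $\bar u u\models\Sigma'$, and therefore $\bar u u\models\Sigma$.

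The main obstacle is the debt bookkeeping in the first step. It is what makes Prop.~\ref{prop:corresp-ultrafilterZ} applicable both to the members of $\Sigma'$ at length $n+1$ and to $\Diamond\chi$ at length $n$. The rest follows the classical $\bml$ argument.
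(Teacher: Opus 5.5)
Your proof is correct and follows the paper's argument: normalize $\Sigma$ via $N_{n+1}$, extend $l^{-1}[u_n]$ together with the extensions of the formulas to an ultrafilter, and conclude with Lemma~\ref{lemma:carac-ufrelation} and Prop.~\ref{prop:corresp-ultrafilterZ}. The differences are local. You check the finite intersection property by passing through $\Diamond\chi$ at length $n$ and back down to $\modelA$, whereas the paper notes that both sets already lie in the witnessing successor ($u''$ there, your $v$). You also work with arbitrary finite $\Delta\subseteq\Sigma'$ instead of enumerating $\Sigma$ as $(\phi_i)_{i\in\mathbb{N}}$, which avoids the paper's implicit assumption that $\Sigma$ is countable.
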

\begin{proof}
    Let $\Sigma$ be a set of formulas and let $\bar u = u_1\ldots u_n$ be a sequence of ultrafilters on $A^+$ such that $\Sigma$ is finitely satisfiable in the successors of $\bar u$, $E^{\ueA}(\bar u)$. We will show that there exists some ultrafilter $u^\prime$ such that $\bar u u^\prime$ that satisfies $\Sigma$. We first observe that, since we will only evaluate formulas from $\Sigma$ on sequences of length $n+1$, by Lemma~\ref{lem:debt-normalization} we may assume without loss of generality that all formulas $\phi \in \Sigma$ have $\D{\phi}\leq n$, replacing $\phi$ by its normalization $N_{n+1}(\phi)$.
    
    We now follow a similar argument as in the inductive step for $\Diamond$ in Prop.~\ref{prop:corresp-ultrafilterZ}. 
    Enumerate $\Sigma$ as $(\phi_i)_{i \in \mathbb{N}}$. For each $i \in \mathbb{N}$, let 
    $\psi_i := \phi_1 \land \cdots \land \phi_i$. We define
        $\Delta_1 \coloneqq  \{ \val{\psi_i}{\modelA} \mid i\in \mathbb{N} \}$,
        $\Delta_2 \coloneqq  \{ W \mid l_E(W) \in u \}$, and
        $\Delta \coloneqq \Delta_1 \cup \Delta_2$.
    Let us see that $\Delta$ has the finite intersection property. First observe that both $\Delta_1$ and $\Delta_2$ are closed under finite intersections and do not contain $\emptyset$. This implies that both $\Delta_1$ and $\Delta_2$ have the finite intersection property and, moreover, to show that $\Delta$ has it too it is enough to check that given $Y \in \Delta_1$ and $X \in \Delta_2$, $X\cap Y\neq \emptyset$.

    Given $X, Y$ as above, we have $Y = \val{\psi_n}{\modelA}$ for some $n\in\mathbb{N}$, hence there exists some ultrafilter $u^{\prime\prime}$ such that $u\prec u^{\prime\prime}$ and $\bar u u^{\prime\prime} \models \psi_n$. In particular, since $\D{\psi_n} \leq n$ we conclude by Prop.~\ref{prop:corresp-ultrafilterZ} that $Y \in u^{\prime\prime}$. On the other hand, from $u\prec u^{\prime\prime}$ it follows by Lemma~\ref{lemma:carac-ufrelation} that $X \in u^{\prime\prime}$. Hence $X\cap Y\in u^{\prime\prime}$ and thus the intersection is non-empty.
    
    Since $\Delta$ has the finite intersection property, applying the ultrafilter theorem we can extend $\Delta$ to an ultrafilter $u^{\prime}$. By analogous reasoning to that of the inductive step for $\Diamond$ in Prop.~\ref{prop:corresp-ultrafilterZ}, $u\prec u^\prime$ and $\bar u u^\prime$ satisfies $\Sigma$.
\end{proof}

Finally, we obtain the intended result: we shift model equivalence under valuation semantics, to `bisimilarity somewhere else', namely, to bisimilarity for sequences over ultrafilter extensions. 

\begin{theorem}\label{th:equivCharUF}
Let $\modelA=(A,\cdot^{\modelA})$ and $\modelB=(B,\cdot^{\modelB})$ be two models, let $s\in A^+$ and $t\in B^+$, and let $\bar u$ and $\bar v$ be sequences of ultrafilters over $A^+$ and $B^+$, respectively, such that
$
\card{\bar u}=\card{s}-1
$ and $
\card{\bar v}=\card{t}-1.
$
Then
\[
(\modelA,s) \equiv (\modelB,t)
\quad \iff \quad 
(\ueA,\bar u \pi_s) \bisimilar (\ueB,\bar v \pi_t).
\]
\end{theorem}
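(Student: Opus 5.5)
The plan is the standard ``bisimilarity somewhere else'' argument, chaining the results already established. Note first that $\card{\bar u \pi_s} = \card{s}$ and $\card{\bar v \pi_t} = \card{t}$. Both sides of the biconditional can only hold when $\card{s}=\card{t}$: $\equiv$ is only defined for sequences of equal length, and bisimulations only relate sequences of equal length. So we may assume throughout that $\card{s}=\card{t}=:n$.

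For the right-to-left direction, assume $(\ueA,\bar u\pi_s) \bisimilar (\ueB,\bar v\pi_t)$. By Thm.~\ref{th:bisImplyEquiv} we get $(\ueA,\bar u\pi_s)\equiv(\ueB,\bar v\pi_t)$. Lemma~\ref{lemma:equivUF} gives $(\modelA,s)\equiv(\ueA,\bar u\pi_s)$ and $(\modelB,t)\equiv(\ueB,\bar v\pi_t)$. Since $\equiv$ is transitive and symmetric (it is defined by agreement on all formulas), $(\modelA,s)\equiv(\modelB,t)$ follows.

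For the left-to-right direction, assume $(\modelA,s)\equiv(\modelB,t)$. The same two instances of Lemma~\ref{lemma:equivUF}, combined by transitivity, yield $(\ueA,\bar u\pi_s)\equiv(\ueB,\bar v\pi_t)$. By Prop.~\ref{prop:UFesSaturado}, both $\ueA$ and $\ueB$ are $\ppml$-saturated. By Thm.~\ref{th:saturationImplHM}, the class of $\ppml$-saturated structures is Hennessy--Milner, so this equivalence upgrades to $(\ueA,\bar u\pi_s)\bisimilar(\ueB,\bar v\pi_t)$. Concretely, the witnessing bisimulation is the relation $\equiv$ restricted to pairs of equal-length sequences of ultrafilters.

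I expect no step here to be a real obstacle, since the heavy lifting sits in Prop.~\ref{prop:corresp-ultrafilterZ}, Lemma~\ref{lemma:equivUF} and Prop.~\ref{prop:UFesSaturado}. The only points needing care are bookkeeping. First, the auxiliary sequences $\bar u,\bar v$ are arbitrary, and Lemma~\ref{lemma:equivUF} is stated for every such choice, so no compatibility between $\bar u$ and $\bar v$ is needed. Second, Thm.~\ref{th:saturationImplHM} is applied to two possibly different saturated structures, which its statement allows, since the Hennessy--Milner property is formulated for any pair of models in the class.
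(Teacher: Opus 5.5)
Your proposal is correct and matches the paper's proof: Lemma~\ref{lemma:equivUF} moves equivalence to the ultrafilter extensions, and Prop.~\ref{prop:UFesSaturado} together with Thm.~\ref{th:saturationImplHM} upgrades it to bisimilarity there. Your explicit use of Thm.~\ref{th:bisImplyEquiv} for the right-to-left direction, and your remark that both sides force $\card{s}=\card{t}$, just spell out what the paper leaves implicit in its chain of biconditionals.
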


\begin{proof}
By Lemma~\ref{lemma:equivUF}, we have
$
(\modelA,s)\equiv(\ueA,\bar u\pi_s)
$ and $
(\modelB,t)\equiv(\ueB,\bar v\pi_t).
$
Hence
$
(\modelA,s)\equiv(\modelB,t)
$ iff $
(\ueA,\bar u\pi_s)\equiv(\ueB,\bar v\pi_t).
$
By Prop.~\ref{prop:UFesSaturado}, both $\ueA$ and $\ueB$ are $\ppml$-saturated. Therefore, by Thm.~\ref{th:saturationImplHM}, the class of $\ppml$-saturated structures has the Hennessy--Milner property, and so
$
(\ueA,\bar u\pi_s)\equiv(\ueB,\bar v\pi_t)
$ iff $
(\ueA,\bar u\pi_s)\bisimilar(\ueB,\bar v\pi_t).
$
This yields the result.
\end{proof}




%% file: characterization.tex
\section{Van Benthem-style Characterization Theorem}
\label{sec:characterization}

We now turn to answer one of the classical and fundamental questions in model theory for modal logics: what is the fragment of first-order logic captured by $\ppml$? To answer such a question, we will follow the path of the \emph{van Benthem characterization theorem}~\cite{vanBenthem1983}, for which bisimulations play a crucial role. In this regard, we will establish that the standard translation for $\ppml$ corresponds to the $\fol$ fragment that is invariant under bisimulation. This helps us to characterize exactly the expressive power of $\ppml$ in terms of $\fol$. 

In order to obtain our main theorem, we will need to quickly review and adapt several notions and preliminary results. First, we will introduce a class of models that constitutes a  Hennessy--Milner class: the class of $\omega$-saturated models. The advantage of $\omega$-saturated models is that they can be constructed via \emph{ultraproducts}, a classical tool in model theory. 
Then, we will establish a correspondence between logical equivalence of models and bisimilarity of their ultrapowers, a result that will be ultimately fundamental in the main theorem. 

In what follows, we will assume some basic knowledge of model theory, e.g., notions like ultraproducts or expansions will not be introduced here. In particular, for a $\sigma$-structure $\modelA=(A,\cdot^{\modelA})$, we denote by $\prod_U \modelA$ its ultrapower modulo some ultrafilter $U$, with $\prod_U A$ its domain. For further details, see, e.g.,~\cite{Chang&KeislerMT}.


\begin{definition}\label{def:realizes}
A $\sigma$-structure  $\modelA$ is \defstyle{$\omega$-saturated} if for every finite subset $Y \subseteq A$, the expansion $\modelA_Y$ realizes every set of formulas $\Gamma(x)$ in $\mathcal{L}_Y$ that is consistent with the theory of $\modelA_Y$.
\end{definition}

The next property positions $\omega$-saturated models as an interesting class for $\ppml$.

\begin{theorem}\label{th:omegaSatImplyMsat}
Every $\omega$-saturated model is $\ppml$-saturated. Therefore, the class of $\omega$-saturated models has the Hennessy--Milner property.
\end{theorem}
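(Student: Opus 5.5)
The plan is to go through the standard translation. First we show that $\ppml$-saturation follows from $\omega$-saturation. The Hennessy--Milner part then comes directly from Thm.~\ref{th:saturationImplHM}.

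Let $\modelA$ be $\omega$-saturated, let $s = s_1\cdots s_n \in A^+$, and let $\Sigma$ be a set of $\ppml$ formulas that is finitely satisfiable by $E^\modelA(s)$. The formulas of $\Sigma$ will only be evaluated on sequences of length $n+1$. So, by Lemma~\ref{lem:debt-normalization}, we may replace each $\phi\in\Sigma$ by $N_{n+1}(\phi)$ and assume $\D{\phi}\leq n$. With the variable sequence $\bar x = x_1\cdots x_{n+1}$, Prop.~\ref{prop:equiv1orderPPML} then applies to every $\phi\in\Sigma$. Take $Y=\{s_1,\dots,s_n\}$, which is finite, and pass to the expansion $\modelA_Y$ with a constant $c_i$ naming each $s_i$. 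Consider the set of $\mathcal{L}_Y$-formulas in the single free variable $x$:
\[
\Gamma(x) \coloneqq \{E(c_n,x)\}\cup\{\ST_{\bar x}(\phi)[x_1\mapsto c_1,\dots,x_n\mapsto c_n, x_{n+1}\mapsto x] \mid \phi\in\Sigma\}.
\]
A small care point: these substitutions are free, since $\ST_{\bar x}(\phi)$ has free variables among $\bar x$ and quantifies only over fresh variables $x_{n+2},\dots$.

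Next we show that $\Gamma(x)$ is consistent with the theory of $\modelA_Y$. Let $\Gamma_0\subseteq\Gamma$ be finite, and let it involve $\phi_1,\dots,\phi_k\in\Sigma$. By finite satisfiability there is $a$ with $s_n\prec a$ and $\modelA,sa\models\phi_j$ for all $j$. By Prop.~\ref{prop:equiv1orderPPML}, $\modelA\models\ST_{\bar x}(\phi_j)[sa]$, so $a$ realizes $\Gamma_0$ in $\modelA_Y$. Hence $\modelA_Y\models\exists x\bigwedge\Gamma_0$. Thus $\Gamma$ is finitely satisfiable in $\modelA_Y$, and therefore consistent with its theory.

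By $\omega$-saturation, some $a\in A$ realizes $\Gamma(x)$. Then $s_n\prec a$, so $sa\in E^\modelA(s)$. Applying Prop.~\ref{prop:equiv1orderPPML} in the converse direction gives $\modelA,sa\models\phi$ for every $\phi\in\Sigma$. Since each normalized formula agrees with the original on sequences of length $n+1$, $sa$ also satisfies the original $\Sigma$. So $\Sigma$ is satisfiable by $E^\modelA(s)$, and $\modelA$ is $\ppml$-saturated. The second sentence of the statement then follows immediately from Thm.~\ref{th:saturationImplHM}.

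The main obstacle is the bookkeeping around modal debt. Without normalization, Prop.~\ref{prop:equiv1orderPPML} need not apply to $\phi$ with $\D{\phi} > n$. Lemma~\ref{lem:debt-normalization} handles exactly this, and everything else is routine.
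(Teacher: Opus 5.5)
Your proof is correct and follows the same route as the paper. Both encode $\Sigma$ via the standard translation as a type $\Gamma(x)$ over the parameters $s_1,\dots,s_n$ together with $E(c_n,x)$, check that every finite part is realized in $\modelA_Y$, realize the type by $\omega$-saturation, translate back with Prop.~\ref{prop:equiv1orderPPML}, and get the Hennessy--Milner property from Thm.~\ref{th:saturationImplHM}; your preliminary normalization $\phi\mapsto N_{n+1}(\phi)$ via Lemma~\ref{lem:debt-normalization} is extra care the paper skips, since it applies Prop.~\ref{prop:equiv1orderPPML} without checking the modal-debt hypothesis, and it is harmless.
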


\begin{proof}
Let $\modelA$ be an $\omega$-saturated $\sigma$-structure (seen as a $\fol$-model). Let $s = s_1 \ldots s_n \in A^+$ and let $\Sigma$ be a set of $\ppml$-formulas. To show that $\modelA$ is $\ppml$-saturated, assume that $\Sigma$ is finitely satisfiable in $E^\modelA(s)$. 

We define the following set of $\fol$-formulas with at most one free variable in the expanded model $\modelA_{\set{s_1,\dots,s_n}}$ as 
    $\Sigma' = \{ E(\bar{s_n},x) \} \cup \ST_{x}(\Sigma)$, 
where
$
\ST_{x}(\Sigma) = \{\, \ST_{x_1\dots x_n x}(\phi)(\bar{s_1},\ldots,\bar{s_n},x) \mid \phi \in \Sigma \,\},
$
and each $\bar{s_i}$ is a constant symbol added to the language and interpreted as $s_i$. 
We show that this set is consistent with the theory of $\modelA_{\set{s_1,\dots,s_n}}$. Let $\Delta \subseteq \Sigma'$ be finite. By hypothesis, the subset of $\Delta$ consisting of formulas from $\ST_{x}(\Sigma)$ is satisfiable at some $a$ s.t. $s_n \prec a$, and if $E(\bar{ s_n},x)$ belongs to $\Delta$, then the same element also satisfies $E(\bar{s_n},x)$. 
Hence, by compactness, $\Sigma'$ is satisfiable and therefore consistent with the theory of $\modelA_{\set{s_1,\dots,s_n}}$.

Finally, since $\modelA$ is $\omega$-saturated, there exists an element $a$ realizing $\Sigma'$. By construction of $\Sigma'$, $a$ is such that $s_n \prec a$ and it satisfies $\ST_{x}(\Sigma)$. By~Prop.~\ref{prop:equiv1orderPPML}, we get that $\modelA, sa \models \Sigma$, thus $\modelA$ is $\ppml$-saturated.
\end{proof}

Next, we state the so-called {\L}o\'s's Theorem, establishing that satisfaction of $\fol$-formulas is preserved under ultraproducts. As usual, this will be a fundamental tool in our characterization result.

\begin{theorem}[{\L}o\'s's Theorem] \label{th:Los}
Let $U$ be an ultrafilter over a nonempty set $I$ and for each $i \in I$, let $\modelA_i$ be a $\sigma$-structure, with $\prod_U \modelA_i$ their ultraproduct modulo $U$. 
Then, for any $\fol$-formula $\psi(x_1,\ldots,x_n) \in \mathcal{L}$ and any $f_U^1,\ldots,f_U^n \in \prod_U A_i$, we have:
\[
    \prod_U \modelA_i \models \psi(x_1,\ldots,x_n)[f_U^1,\ldots,f_U^n] 
     \ \iff \ 
    \{ i \in I \mid \modelA_i \models \psi(x_1,\ldots,x_n)[f^1(i),\ldots,f^n(i)] \} \in U.
\]
\end{theorem}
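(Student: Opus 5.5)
We prove Łoś's Theorem by induction on the structure of $\psi$, taking $\neg$, $\land$ and $\exists$ as primitive (the other connectives and $\forall$ are definable). Recall that the elements of $\prod_U A_i$ are equivalence classes $f_U$ of functions $f\in\prod_{i\in I}A_i$, where $f\sim_U g$ iff $\{i\mid f(i)=g(i)\}\in U$. For a relation symbol $R$ of arity $k$, we have $(f^1_U,\ldots,f^k_U)\in R^{\prod_U\modelA_i}$ iff $\{i\mid (f^1(i),\ldots,f^k(i))\in R^{\modelA_i}\}\in U$. Before starting, we check that this interpretation is well defined. If $f^j\sim_U g^j$ for each $j$, then the sets where the representatives agree are finitely many members of $U$. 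Their intersection, which lies in $U$, is contained in the symmetric-difference complement of the two relevant index sets. So by upward closure one set is in $U$ iff the other is. Given this, the atomic case is immediate from the definition, and equality atoms (if present) follow from the definition of $\sim_U$.

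The Boolean steps use the filter and ultrafilter properties. For the $\land$ step, we use that $U$ is closed under finite intersections and upward closed: writing $S_\phi$ for the index set of $\phi$, we have $S_{\phi\land\chi}=S_\phi\cap S_\chi$, and this lies in $U$ iff both $S_\phi$ and $S_\chi$ do. For the $\neg$ step, note that $S_{\neg\phi}=I\setminus S_\phi$. The ultrafilter property then says exactly that $S_{\neg\phi}\in U$ iff $S_\phi\notin U$, which, by the induction hypothesis, holds iff the ultraproduct does not satisfy $\phi$.

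The main step is the existential quantifier, $\psi=\exists y\,\phi(\bar x,y)$. For the direction from left to right, suppose $\prod_U\modelA_i\models\phi[\bar f_U,g_U]$ for some $g$. By the induction hypothesis the index set for $\phi$ at $(\bar f(i),g(i))$ is in $U$. This set is contained in the index set for $\exists y\,\phi$ at $\bar f(i)$, so the latter is in $U$ by upward closure. For the converse, let $S$ be the set of $i$ such that $\modelA_i\models\exists y\,\phi[\bar f(i)]$, and suppose $S\in U$. Using the axiom of choice, we pick for each $i\in S$ a witness $g(i)\in A_i$, and for $i\notin S$ an arbitrary $g(i)\in A_i$; this is possible because domains are nonempty. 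Then the index set for $\phi$ at $(\bar f(i),g(i))$ contains $S$, hence lies in $U$. By the induction hypothesis, $\prod_U\modelA_i\models\phi[\bar f_U,g_U]$, and therefore the ultraproduct satisfies $\exists y\,\phi[\bar f_U]$.

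We expect the only delicate point to be this use of choice together with the well-definedness of satisfaction with respect to representatives. The latter in fact follows by the same induction, since changing representatives only changes the relevant index sets by a set outside $U$. Everything else is routine bookkeeping with the filter axioms.
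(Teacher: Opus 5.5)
Your proof is correct. It is the standard induction on formulas: the atomic case uses the well-definedness of relations in the ultraproduct, the Boolean cases use the filter and ultrafilter axioms, and the existential case uses a choice of witnesses on an index set in $U$. The paper gives no proof of this statement; it invokes {\L}o\'s's Theorem as a classical result (with reference to Chang and Keisler), and your argument is essentially the textbook proof found there.
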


Thus, we obtain the corresponding result for $\ppml$.

\begin{corollary} \label{cor:cteUltrapower}
Let $\modelA=(A,\cdot^{\modelA})$ be a $\sigma$-structure, and let 
$U$ be an ultrafilter. For $a\in A$, let $(f_a)_U \in \prod_U A$ be the constant function $f_a(i) = a$ for every $i \in I$, and for $t=a_1\ldots a_n\in A^+$, let $f_t$ be the function  consisting of the sequence of functions $f_{a_1}\ldots f_{a_n}$. Then, for every $\ppml$-formula $\psi$ and every $s \in A^+$ such that $\D{\psi}\leq \card{s}-1$, we have
\[
        \modelA, s \models \psi \ \iff \ \prod_U \modelA, (f_s)_U \models \psi.
\]
\end{corollary}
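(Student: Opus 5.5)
The plan is to chain three earlier results: Prop.~\ref{prop:equiv1orderPPML} to move from $\ppml$ to $\fol$, {\L}o\'s's Theorem (Thm.~\ref{th:Los}) to move between $\modelA$ and its ultrapower, and Prop.~\ref{prop:equiv1orderPPML} once more to return to $\ppml$ in $\prod_U \modelA$. We read $(f_s)_U$ as the valuation $(f_{a_1})_U\cdots(f_{a_n})_U$ in $(\prod_U A)^+$, where $s=a_1\ldots a_n$. This valuation has length $n=\card{s}$.

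Let $\psi$ be a $\ppml$-formula with $\D{\psi}\leq n-1$, and fix the variables $\bar x = x_1\ldots x_n$. Since $\card{\bar x} = n \geq \D{\psi}+1$, Prop.~\ref{prop:equiv1orderPPML} applies in $\modelA$ and gives
\[
\modelA, s\models\psi \iff \modelA\models \ST_{\bar x}(\psi)[a_1,\ldots,a_n].
\]
For the ultrapower, apply Thm.~\ref{th:Los} with every $\modelA_i=\modelA$, to the $\fol$-formula $\ST_{\bar x}(\psi)$ and the elements $(f_{a_1})_U,\ldots,(f_{a_n})_U$. This yields $\prod_U\modelA\models \ST_{\bar x}(\psi)[(f_{a_1})_U,\ldots,(f_{a_n})_U]$ if and only if
\[
\{i\in I \mid \modelA\models \ST_{\bar x}(\psi)[f_{a_1}(i),\ldots,f_{a_n}(i)]\}\in U.
\]
Because each $f_{a_j}$ is constant, this index set is either all of $I$ or $\emptyset$, depending on whether $\modelA\models \ST_{\bar x}(\psi)[a_1,\ldots,a_n]$ holds. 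Since $I\in U$ and $\emptyset\notin U$ (ultrafilters are proper), membership in $U$ is equivalent to $\modelA\models \ST_{\bar x}(\psi)[a_1,\ldots,a_n]$.

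To finish, apply Prop.~\ref{prop:equiv1orderPPML} again, this time to the structure $\prod_U\modelA$ with the same $\bar x$. This is allowed because the proposition holds for an arbitrary $\sigma$-structure and the length bound on $\bar x$ is unchanged. It converts $\prod_U\modelA\models \ST_{\bar x}(\psi)[(f_s)_U]$ into $\prod_U\modelA,(f_s)_U\models\psi$, and chaining the equivalences proves the corollary.

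No step is a real obstacle. The only points that need care are the length hypothesis $\D{\psi}\leq\card{s}-1$, which is exactly what lets us use Prop.~\ref{prop:equiv1orderPPML} with $\card{\bar x}=\card{s}$, and the observation that $U$ is proper, so the constant-function index set lies in $U$ exactly when it equals $I$.
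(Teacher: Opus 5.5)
Your proposal is correct and follows essentially the same route as the paper: it passes to $\ST_{\bar x}(\psi)$ via Prop.~\ref{prop:equiv1orderPPML} (justified by $\card{\bar x}=\card{s}\geq\D{\psi}+1$), applies {\L}o\'s's Theorem to the constant functions, and observes that the resulting index set is either $I$ or $\emptyset$. You are slightly more explicit than the paper in noting that Prop.~\ref{prop:equiv1orderPPML} is also applied inside $\prod_U\modelA$, and that properness of $U$ is what excludes $\emptyset$, but the argument is the same.
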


\begin{proof}
Let $\psi$ be a $\ppml$-formula and let $\bar x = x_1 \ldots x_n$ be a sequence of $\fol$ variables. Applying the standard translation to the right-hand side of the equivalence that we want to prove, and then applying~Thm.~\ref{th:Los}, we obtain
\[ 
\prod_U \modelA \models \ST_{\bar x}(\psi)[(f_s)_U]
\ \iff \ 
\{ i \in I \mid \modelA \models \ST_{\bar x}(\psi)[f_s(i)] \} \in U.
\]
Since $f_s$ is constant, this is equivalent to
\[  
\prod_U \modelA \models \ST_{\bar x}(\psi)[(f_s)_U]
\ \iff \ 
\{ i \in I \mid \modelA \models \ST_{\bar x}(\psi)[s] \} \in U.
\]
By applying Prop.~\ref{prop:equiv1orderPPML} to the right-hand side we have the following statement, which is equivalent to what we want to prove:
\[
\prod_U \modelA, (f_s)_U \models \psi
\ \iff \ 
\{ i \in I \mid \modelA, s \models \psi \} \in U.
\]
Observe that, if $\modelA, s \models \psi$, then $\{ i \in I \mid \modelA, s \models \psi \} = I$, 
which trivially belongs to $U$. Otherwise,
$\{ i \in I \mid \modelA, s \models \psi \} = \emptyset$, 
which trivially does not belong to $U$. Hence, 
$\modelA, s \models \psi
\ \iff \ 
\prod_U \modelA, (f_s)_U \models \psi$.
\end{proof}

We need to recall a few more technical concepts to complete our result. 
We say that a filter $F$ is \textbf{countably incomplete} if and only if there exists a countable set $E \subseteq F$ such that $\bigcap E \not \in F$.

The following result is well known \cite{Chang&KeislerMT}:

\begin{proposition} \label{prop:incompleteUltrafilter}
Let $\mathcal{L}$ be a countable first-order language, let $U$ be a countably incomplete ultrafilter over a nonempty set $I$, and let $\foModelM$ be an $\mathcal{L}$-structure. Then the ultraproduct $\prod_U \foModelM$ is $\omega$-saturated.
\end{proposition}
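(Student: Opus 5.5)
We prove the classical fact that a countably incomplete ultrapower is $\omega$-saturated. Fix a countable chain witnessing countable incompleteness, realize each finite part of a type on a ``level'' of the chain, and use {\L}o\'s's Theorem (Thm.~\ref{th:Los}) to verify realization coordinatewise.

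\textbf{Step 1: a decreasing sequence.} Since $U$ is countably incomplete, there are $E_0, E_1, \ldots \in U$ with $\bigcap_k E_k \notin U$. Set $I_0 := I$ and $I_k := E_0\cap\cdots\cap E_{k-1}\setminus \bigcap_j E_j$ for $k\geq 1$. Each $I_k\in U$ because $U$ is an ultrafilter. The sequence $I_0\supseteq I_1\supseteq\cdots$ is decreasing with $\bigcap_k I_k=\emptyset$. Hence every $i\in I$ has a finite level $k(i):=\max\{k \mid i\in I_k\}$.

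\textbf{Step 2: the type and its finite parts.} Let $Y\subseteq \prod_U M$ be finite, say $Y=\{(g_1)_U,\dots,(g_m)_U\}$, where $M$ is the domain of $\foModelM$. Let $\Gamma(x)$ be a set of $\mathcal{L}_Y$-formulas consistent with the theory of the expansion. Since $\mathcal{L}$ is countable, so is $\mathcal{L}_Y$. Enumerate $\Gamma(x)=\{\gamma_1,\gamma_2,\ldots\}$, padding with $x=x$ if $\Gamma$ is finite, and write $\gamma_j=\gamma_j'(x,\bar c)$ with $\gamma_j'$ an $\mathcal{L}$-formula and $\bar c$ the new constants for $Y$. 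For each $k$ put
\[
X_k := I_k \cap \{ i\in I \mid \foModelM \models \exists x \textstyle\bigwedge_{j\leq k}\gamma_j'(x,g_1(i),\dots,g_m(i))\}.
\]
By consistency, the ultrapower with constants satisfies $\exists x\bigwedge_{j\le k}\gamma_j$. By Thm.~\ref{th:Los} (extended to the constants of $Y$ by reading them through the representatives $g_l$), the second set lies in $U$, so $X_k\in U$. Moreover $X_0\supseteq X_1\supseteq\cdots$ and $\bigcap_k X_k=\emptyset$.

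\textbf{Step 3: define the realizing element.} For $i\in X_0=I$, let $k^*(i):=\max\{k\mid i\in X_k\}$, which is finite. For $k^*(i)\geq 1$, choose $f(i)\in M$ witnessing $\bigwedge_{j\le k^*(i)}\gamma_j'(x,\bar g(i))$; for $k^*(i)=0$, let $f(i)$ be arbitrary. This uses the axiom of choice. Fix $j$. For every $i\in X_j$ we have $k^*(i)\geq j$, so $\foModelM\models\gamma_j'(f(i),\bar g(i))$. Thus $\{i \mid \foModelM\models \gamma_j'(f(i),\bar g(i))\}\supseteq X_j\in U$, and by Thm.~\ref{th:Los} the element $f_U$ satisfies $\gamma_j$ in the expanded ultrapower. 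Hence $f_U$ realizes $\Gamma$.

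The main obstacle is the bookkeeping in Steps 1 and 3. The sets must be intersected with the decreasing chain $I_k$ so that every coordinate has a finite maximal level; otherwise no single $f(i)$ could satisfy ``all formulas up to $k$''. Countable incompleteness is used exactly to make $\bigcap_k X_k=\emptyset$, and countability of $\mathcal{L}$ is used exactly to enumerate $\Gamma$ in $\omega$ steps. A smaller technical point is applying {\L}o\'s to formulas with parameters from $Y$, which is routine once the constants are replaced by the representatives $g_l$.
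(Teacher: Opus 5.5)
Your proof is correct. It is the standard argument from \cite{Chang&KeislerMT}: a decreasing chain in $U$ with empty intersection, the sets $X_k\in U$ obtained via {\L}o\'s's Theorem, and a coordinatewise choice of witnesses at the finite level $k^*(i)$. The paper does not prove this proposition; it cites it as well known from that reference, and your only redundancy is the level $k(i)$ from Step 1, which is never used because $k^*(i)$ does all the work.
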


It is time to combine all our ingredients. To do so, we introduce the so-called \emph{Detour theorem}, but specifically instantiated with $\ppml$-formulas. This result enables us to establish a relation between logical equivalence of models, and bisimilarity at the level of utraproducts.

\begin{theorem} \label{th:detour}
Let $\modelA$ and $\modelB$ be $\sigma$-structures, and let $s \in A^+$ and $t \in B^+$ such that $\card{s}=\card{t}=n$. Then, the following statements are equivalent:
\begin{enumerate}
    \item $(\modelA,s) \equiv (\modelB,t)$.
    \item There exist ultrapowers $\prod_U \modelA$ and $\prod_U \modelB$ modulo some ultrafilter $U$ such that $(\prod_U \modelA, (f_s)_U)\bisimilar (\prod_U \modelB,(f_t)_U)$.
\end{enumerate}
\end{theorem}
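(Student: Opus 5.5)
For the direction (2)$\Rightarrow$(1), I would argue by chaining equivalences. Assume $(\prod_U \modelA, (f_s)_U)\bisimilar (\prod_U \modelB,(f_t)_U)$. By Thm.~\ref{th:bisImplyEquiv} the two sequences of constant functions are $\ppml$-equivalent. Since $\card{s}=\card{t}=n$, Lemma~\ref{lem:debt-normalization} reduces $(\modelA,s)\equiv(\modelB,t)$ to $(\modelA,s)\equiv_n(\modelB,t)$, so it suffices to consider formulas $\psi$ with $\D{\psi}\leq n-1$. For such $\psi$, Cor.~\ref{cor:cteUltrapower} gives $\modelA,s\models\psi$ iff $\prod_U\modelA,(f_s)_U\models\psi$, and likewise for $\modelB,t$. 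Composing these three equivalences yields (1).

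For (1)$\Rightarrow$(2), I would first fix a countably incomplete ultrafilter $U$. The concrete choice is a nonprincipal ultrafilter on $I=\N$: the sets $\N\setminus\{0,\dots,k\}$ all lie in $U$, but their intersection is empty. If $\sigma$ is countable, Prop.~\ref{prop:incompleteUltrafilter} makes both $\prod_U\modelA$ and $\prod_U\modelB$ $\omega$-saturated. By Thm.~\ref{th:omegaSatImplyMsat} they are then $\ppml$-saturated, and hence form a Hennessy--Milner class by Thm.~\ref{th:saturationImplHM}. It therefore remains to show $(\prod_U\modelA,(f_s)_U)\equiv(\prod_U\modelB,(f_t)_U)$. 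This follows from the same chain as before: Lemma~\ref{lem:debt-normalization} restricts attention to formulas of debt at most $n-1$, and Cor.~\ref{cor:cteUltrapower} transfers hypothesis (1) to the ultrapowers. The Hennessy--Milner property then yields the required bisimilarity.

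The main obstacle is the countability hypothesis in Prop.~\ref{prop:incompleteUltrafilter}. One option is to assume $\sigma$ countable, as is implicit throughout the paper. If $\sigma$ is uncountable, I would instead use a countably incomplete $\card{\mathcal{L}}^+$-good ultrafilter, which yields $\card{\mathcal{L}}^+$-saturated and hence $\omega$-saturated ultrapowers. A second point to verify is that $\omega$-saturation in the sense of Def.~\ref{def:realizes} is genuinely what Thm.~\ref{th:omegaSatImplyMsat} needs: only finitely many parameters $s_1,\dots,s_n$ are added, so this is fine. Finally, the same $U$ must serve for both models, which is automatic since the choice of $U$ does not depend on $\modelA$ or $\modelB$.
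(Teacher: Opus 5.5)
Your proposal is correct and follows essentially the same route as the paper. For (2)$\Rightarrow$(1) you use Thm.~\ref{th:bisImplyEquiv}, Cor.~\ref{cor:cteUltrapower} and Lemma~\ref{lem:debt-normalization}, and for (1)$\Rightarrow$(2) you transfer equivalence to ultrapowers modulo a countably incomplete ultrafilter on $\N$ and then apply $\omega$-saturation and the Hennessy--Milner property of $\ppml$-saturated models. Your observation that Prop.~\ref{prop:incompleteUltrafilter} requires a countable language, so that an uncountable $\sigma$ calls for a countably incomplete $\card{\mathcal{L}}^+$-good ultrafilter, covers a hypothesis the paper leaves implicit.
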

\begin{proof}
    Let us start by proving $2) \Rightarrow 1)$, then assume there exist ultrapowers $\prod_U \modelA$ and $\prod_U \modelB$ modulo some ultrafilter $U$ such that $(\prod_U \modelA, (f_s)_U)\bisimilar (\prod_U \modelB,(f_t)_U)$. Notice that, by~Thm.~\ref{th:bisImplyEquiv}, we have $(\prod_U \modelA, (f_s)_U)\equiv (\prod_U \modelB,(f_t)_U)$. In particular, $(\prod_U \modelA, (f_s)_U)\equiv_n(\prod_U \modelB,(f_t)_U)$. 
    Under this hypothesis, we can apply~Cor.~\ref{cor:cteUltrapower} and obtain $\modelA,s\equiv_n\modelB,t$. But by~Lemma~\ref{lem:debt-normalization}, we get  $\modelA,s\equiv\modelB,t$.

    To show $1) \Rightarrow 2)$, assume $(\modelA,s) \equiv (\modelB,t)$. By hypothesis, we get that in particular $\modelA,s\models\psi$ iff $\modelB,t\models\psi$, for all $\psi$ such that $\D{\psi}\leq n-1$. Let $U$ be a countably incomplete ultrafilter over $\mathbb{N}$.\footnote{It is well known that such ultrafilters exist. One example is the following: consider the filter $\{ X \subseteq \mathbb{N} \mid \mathbb{N} \setminus X \text{ is finite} \}$, then extend it to an ultrafilter using the ultrafilter theorem.} By Cor.~\ref{cor:cteUltrapower}, we get  $(\prod_U \modelA, (f_s)_U)\equiv_n(\prod_U \modelB,(f_t)_U$. By Cor.~\ref{cor:cteUltrapower}, and therefore by~Lemma~\ref{lem:debt-normalization}, we have  $(\prod_U \modelA, (f_s)_U)\equiv (\prod_U \modelB,(f_t)_U)$. 
    Moreover, by Prop.~\ref{prop:incompleteUltrafilter}, $\prod_U \modelA$ and $\prod_U \modelB$ are $\omega$-saturated. Hence, by Thm.~\ref{th:omegaSatImplyMsat}, these models satisfy the Hennessy--Milner property, thus we can conclude that $(\prod_U \modelA, (f_s)_U)\bisimilar (\prod_U \modelB,(f_t)_U)$.
\end{proof}

In what follows, let $\alanguage_\sigma^n$ denote the first-order language over signature $\sigma$ with $n$ free variables. 

\begin{definition}\label{def:invariant-under-bisim}

We say that a formula $\phi(\bar{x}) \in \mathcal{L}_\sigma^n$ is invariant under bisimulation if for every bisimulation $Z$ between $\modelA$ and $\modelB$, and for all $s \in A^+$ and $t \in B^+$ such that $\card{s}=\card{t}=n$ and  $sZt$, we have
\[
\modelA \models \phi(s) \quad \iff\quad  \modelB \models \phi(t).
\]
\end{definition}

Finally, we are able to prove the result usually known as \emph{van Benthem characterization theorem}, which makes explicit the connection between $\ppml$, $\fol$ and $\ppml$-bisimulations.

\begin{theorem}\label{th:vanBenthem}
Let 
$\phi(\bar{x}) \in \mathcal{L}_\sigma^n$. Then $\phi$ is invariant under bisimulation if and only if  
$\models\phi(\bar{x}) \lra \ST_{\bar{x}}(\psi)$, for some $\psi \in \ppml$ such that $\D{\psi} \leq n-1$. 
\end{theorem}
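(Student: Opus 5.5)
The direction from right to left is immediate. If $\models\phi(\bar x)\lra\ST_{\bar x}(\psi)$ with $\D{\psi}\leq n-1$, take a bisimulation $Z$ and $sZt$ with $\card{s}=\card{t}=n$. By Prop.~\ref{prop:equiv1orderPPML} (applicable since $n\geq\D{\psi}+1$), $\modelA\models\phi[s]$ iff $\modelA,s\models\psi$, and similarly for $\modelB,t$. Thm.~\ref{th:bisImplyEquiv} then gives $\modelA,s\models\psi$ iff $\modelB,t\models\psi$.

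For the converse I will follow the classical compactness argument. Assume $\phi$ is bisimulation invariant and, to avoid trivialities, satisfiable; if it is not, take $\psi=\bot$. Let
\[
\mathrm{MOC}(\phi):=\{\ST_{\bar x}(\psi)\mid \psi\in\ppml,\ \D{\psi}\leq n-1,\ \phi\models\ST_{\bar x}(\psi)\}.
\]
The key claim is $\mathrm{MOC}(\phi)\models\phi$. Granting it, compactness yields a finite subset $\{\ST_{\bar x}(\psi_1),\dots,\ST_{\bar x}(\psi_k)\}$ with $\models\bigwedge_i\ST_{\bar x}(\psi_i)\to\phi$. Since $\ST_{\bar x}$ commutes with conjunction and $\D{\bigwedge_i\psi_i}\leq n-1$, we get $\models\phi\lra\ST_{\bar x}(\bigwedge_i\psi_i)$. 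The converse implication holds because each conjunct lies in $\mathrm{MOC}(\phi)$.

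To prove the claim, let $\modelA\models\mathrm{MOC}(\phi)[s]$ with $\card{s}=n$, and let $T(\bar x)=\{\ST_{\bar x}(\psi)\mid \D{\psi}\leq n-1,\ \modelA,s\models\psi\}$, using Prop.~\ref{prop:equiv1orderPPML}. The set $T\cup\{\phi\}$ is satisfiable. Otherwise compactness gives $\psi_1,\dots,\psi_k$ true at $s$ with $\phi\models\ST_{\bar x}(\lnot\bigwedge_i\psi_i)$. But $\D{\lnot\bigwedge\psi_i}\leq n-1$, so this formula belongs to $\mathrm{MOC}(\phi)$ and is false at $s$, a contradiction. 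So take $\modelB,t$ with $\modelB\models\phi[t]$ and $\modelB\models T[t]$.

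Then $(\modelA,s)\equiv_n(\modelB,t)$. For any $\psi$ with $\D{\psi}\leq n-1$, either $\psi$ or $\lnot\psi$ is true at $s$, so it lies in $T$, and hence $t$ agrees with $s$ on $\psi$. By Lemma~\ref{lem:debt-normalization} this upgrades to $(\modelA,s)\equiv(\modelB,t)$. Thm.~\ref{th:detour} now supplies an ultrafilter $U$ with $(\prod_U\modelA,(f_s)_U)\bisimilar(\prod_U\modelB,(f_t)_U)$. By {\L}o\'s's Theorem (Thm.~\ref{th:Los}) applied to the constant functions, $\prod_U\modelB\models\phi[(f_t)_U]$. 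Bisimulation invariance of $\phi$, which is applicable since both tuples have length $n$, transfers this to $\prod_U\modelA\models\phi[(f_s)_U]$. {\L}o\'s again gives $\modelA\models\phi[s]$.

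I expect the main obstacle to be the bookkeeping around modal debt rather than any single deep step. One has to make sure that restricting to formulas with $\D{\psi}\leq n-1$ still captures full $\ppml$-equivalence, which is exactly what Lemma~\ref{lem:debt-normalization} provides. One also has to check that Prop.~\ref{prop:equiv1orderPPML} is only ever applied with $\card{\bar x}=n>\D{\psi}$, and that negations and conjunctions of such formulas stay within the debt bound, which they do by Def.~\ref{def:debt}. The ultrapower detour and the saturation machinery are already packaged in Thm.~\ref{th:detour}.
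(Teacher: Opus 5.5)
Your proposal is correct and follows the paper's proof essentially step for step: the set $\moc(\phi)$ of $\ppml$-consequences, compactness, the consistency of $T(\bar x)\cup\{\phi(\bar x)\}$, and the detour through ultrapowers via Thm.~\ref{th:detour} and {\L}o\'s's Theorem. The only difference is bookkeeping: you restrict $\moc(\phi)$ and $T$ to formulas of modal debt at most $n-1$ and upgrade $\equiv_n$ to $\equiv$ via Lemma~\ref{lem:debt-normalization}, whereas the paper ranges over all $\ppml$-formulas and normalizes only at the end. As a result, every appeal you make to Prop.~\ref{prop:equiv1orderPPML} stays within its stated hypothesis $|\bar x|\geq \D{\psi}+1$.
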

\begin{proof}
    The nontrivial case is the left-to-right direction, since the converse follows directly from Thm.~\ref{th:bisImplyEquiv}. 
    Suppose that $\phi$ is invariant under bisimulation. Consider the set of $\ppml$-consequences of $\phi$:
    \begin{align*}
        \moc(\phi) := \{  \ST_{\bar{x}}(\psi) \mid \psi \in \ppml \text{ such that  } \phi(\bar{x}) \models \ST_{\bar{x}}(\psi) \}.
    \end{align*} 
    
    First, let us show that, if $\moc(\phi) \models \phi(\bar{x})$, then $\phi(\bar{x})$ is equivalent to the translation via $\ST$ of a $\ppml$-formula with debt bounded by $n-1$. Notice that
    by the compactness theorem there is a finite $\Delta \subseteq \moc(\phi)$ such that $\Delta \models \phi(\bar{x})$. Equivalently, 
        $\models \bigwedge_{\ST_{\bar{x}}(\chi) \in \Delta} \ST_{\bar{x}}(\chi) \to \phi(\bar{x})$, 
    and, since  $\Delta$ is finite, $ \bigwedge_{\ST_{\bar{x}}(\chi) \in \Delta} \ST_{\bar{x}}(\chi)$ is a $\fol$-formula. Moreover, by the definition of $\moc(\phi)$, the converse also holds, hence 
        $\models \bigwedge_{\ST_{\bar{x}}(\chi) \in \Delta} \ST_{\bar{x}}(\chi) \leftrightarrow \phi(\bar{x})$.
    Thus, since $ \bigwedge_{\ST_{\bar{x}}(\chi) \in \Delta} \ST_{\bar{x}}(\chi)$ is the standard translation of a $\ppml$-formula. Finally, by Lemma~\ref{lem:debt-normalization}, we obtain the desired $\ppml$ formula $\psi$ such that $\D{\psi} \leq n - 1$.

    Therefore it is enough to prove that $\moc(\phi)\models\phi(\bar{x})$. Let $\foModelM$ be a first-order model such that $\foModelM \models \moc(\phi)[s]$, for some $s\in M^n$. We must show that $\foModelM \models \phi(\bar{x})[s]$. 

    To use the hypothesis that $\phi$ is bisimulation-invariant we will construct a model $\foModelN$ and a tuple $ t\in N^n$ such that $\foModelN\models\phi[t]$ and $(\foModelM,s)\bisimilar(\foModelN,t)$. The idea is to build a theory that, besides containing $\phi$, ensures that any two elements satisfying it are $\ppml$-equivalent. If we succeed, we can use our previous results to ``jump'' to saturated models, where equivalence and bisimulation coincide, and then apply the bisimulation-invariance of $\phi$. 

    With this in mind, define:
$T(\bar{x}) = \{ \ST_{\bar{x}}(\psi) \mid \psi \in \ppml \text{ and } \foModelM \models \ST_{\bar{x}}(\psi)[s] \}.$ 
    This set has the required property: any tuple satisfying $T(\bar{x})$ will be $\ppml$-equivalent to $s$. Hence we look for a model $\foModelN$ and a tuple $t$ such that $\foModelN \models T(\bar{x}) \cup \{\phi(\bar{x})\}[t]$. We prove the existence of such a model by showing that $T(\bar{x})\cup\{\phi(\bar{x})\}$ is consistent. Indeed, suppose towards a contradiction that $T(\bar{x})\cup\{\phi(\bar{x})\}$ is inconsistent. Then $T(\bar{x})\not\models \phi(\bar{x})$, i.e. $T(\bar{x})\models \neg\phi(\bar{x})$. By compactness there is a finite $T_0(\bar{x})\subseteq T(\bar{x})$ such that 
        $\models \Big(\bigwedge_{\chi\in T_0(\bar{x})} \chi\Big) \rightarrow \neg\phi(\bar{x})$. 
    Equivalently, by contrapositive we get that 
        $\models \phi(\bar{x}) \rightarrow \neg\Big(\bigwedge_{\chi\in T_0(\bar{x})} \chi\Big)$, 
    thus $\neg\big(\bigwedge_{\chi\in T_0(\bar{x})} \chi\big)\in \moc(\phi)$. But then, since $\foModelM \models \moc(\phi)[s]$, we would have $\foModelM \models \neg\big(\bigwedge_{\chi\in T_0(\bar{x})} \chi\big)$, contradicting $\foModelM \models T(\bar{x})$. This contradiction shows that $T(\bar{x})\cup\{\phi(\bar{x})\}$ is consistent.

    Hence, there exists a model $\foModelN$ and a tuple $t\in N^n$ such that $\foModelN \models T(\bar{x})\cup\{\phi(\bar{x})\}[t]$. By the definition of $T(\bar{x})$, it follows that $\foModelM,s \equiv \foModelN,t$, by seeing $\foModelM$ and $\foModelN$ as models of $\ppml$. Then, by~Thm.~\ref{th:detour} there is an ultrafilter $U$ such that $(\prod_U \foModelM,(f_s)_U) \bisimilar (\prod_U \foModelN,(f_t)_U)$. Since $\foModelN \models \phi(\bar{x})[t]$, by~Thm.~\ref{th:Los}  
    we get $\prod_U \foModelN \models \phi(\bar x)[(f_t)_U]$. Because $\phi(\bar{x})$ is invariant under bisimulation, it follows that $\prod_U \foModelM \models  \phi(\bar x)[(f_s)_U]$, and hence $\foModelM \models \phi(\bar{x})[s]$, completing the proof.
\end{proof}

%% file: final.tex
\section{Conclusions}
\label{sec:final}

This paper continues the model-theoretic study of Path Predicate Modal Logic ($\ppml$), complementing the categorical tools by which this study was initiated in~\cite{FigGoren25}. Here, we deal with more traditional model-theoretic machinery such as saturation, ultrafilter extensions and ultraproducts. We started by introducing an unbounded notion of bisimulation for valuations, i.e. sequences of states, showing that bisimilar valuations satisfy the same $\ppml$-formulas. Then, we identified natural Hennessy--Milner classes for the logic, and proved that ultrafilter extensions provide canonical saturated companions. Finally, we established a van Benthem-style characterization theorem, showing that $\ppml$ corresponds exactly to the fragment of $\fol$ that is invariant under bisimulation.

A distinctive feature of $\ppml$ is that, unlike $\bml$, it is not naturally evaluated at a single point, but rather at a finite sequence of points. This difference is reflected throughout the paper. In $\bml$, the standard translation maps formulas to first-order formulas with one free variable, corresponding to the current state of evaluation. By contrast, in $\ppml$ the natural first-order counterpart uses tuples of free variables representing an $E$-chain, since relation symbols are interpreted on states occurring along a path. Accordingly, the notions of bisimulation and model-theoretic constructions considered in this paper must be formulated over sequences rather than single states.

This valuation-based perspective suggests several directions for further research. 
To start with, it would be interesting to identify a first-order fragment that captures more faithfully the specific shape of the standard translation of $\ppml$. As discussed in Sec.~\ref{sec:st}, the image of the translation sits naturally inside the fluted fragment, but this fragment is still substantially more expressive than what is needed for $\ppml$. The formulas arising from $\ppml$ obey not only the fluted ordering of variables, but also a strict path discipline governed by the distinguished relation $E$: quantification extends the current valuation one step at a time along an $E$-chain, and atomic predicates are evaluated only on contiguous suffixes of that valuation. This suggests the study of a more refined fragment of $\fol$, combining flutedness with an explicit $E$-chained, or path-guarded, discipline. A precise analysis of such a fragment could help clarify both the exact first-order nature of $\ppml$ and the source of its comparatively tame complexity.

Closely related to this is the possibility of relaxing the fluted discipline already at the modal level. In the present formulation of $\ppml$, relation symbols are always evaluated on contiguous suffixes of the current valuation, so the variables are used in a strictly ordered, fluted fashion. A natural extension would be to allow formulas to refer to previously visited points in a less rigid way, without preserving the suffix order. Studying such a non-fluted variant of $\ppml$ could help clarify which aspects of the logic depend essentially on the fluted organization of variables, and which belong more fundamentally to its path-based semantics.

Another natural continuation of the present work is to pursue further model-theoretic and expressivity-theoretic results for $\ppml$, including questions of interpolation and Beth definability. Another line of research is to investigate proof-theoretic aspects of the logic, such as complete axiomatizations for well-nested fragments. It would also be interesting to study extensions of $\ppml$ with richer modal resources---for example,  fixpoint operators, or additional path constructors---and determine to what extent the model-theoretic picture developed here persists. Finally, it would be worthwhile to clarify further the connections between $\ppml$ and data-aware query languages (including GQL~\cite{FrancisGGLMMMPR23b} and SHACL~\cite{Ortiz23}), and to identify other natural logical formalisms that can be uniformly captured within this path-based framework.